\theoremstyle{plain}
\newtheorem{thm}{Theorem}[section]
\newtheorem{lemma}[thm]{Lemma}
\newtheorem{cor}[thm]{Corollary}
\newtheorem*{thm*}{Theorem}
\newtheorem*{lemma*}{Lemma}
\newtheorem*{prop*}{Proposition}
\newtheorem*{cor*}{Corollary}
\newtheorem*{conj*}{Conjecture}
\theoremstyle{definition}
\newtheorem{defn}[thm]{Definition}
\newtheorem{ex}[thm]{Example}
\theoremstyle{remark}
\newcommand{\zz}{\mathbb{Z}}
\newcommand{\nn}{\mathbb{N}}
\newcommand{\rr}{\mathbb{R}}
\newcommand{\cc}{\mathbb{C}}
\newcommand{\ind}{\mbox{$\perp \kern-5.5pt \perp$}}
\begin{document}

\title{Identifiability of 3-Class Jukes-Cantor Mixtures}
\author{Colby Long and Seth Sullivant}

             \email{celong2@ncsu.edu}
              \email{smsulli2@ncsu.edu } 
               
              \address{Department of Mathematics, Box 8205, North Carolina State University, Raleigh, NC, 27695-8205, USA }  

\maketitle
\begin{abstract}
We prove identifiability of the tree parameters of
the $3$-class Jukes-Cantor mixture model.  The proof
uses ideas from algebraic statistics, in particular: finding
phylogenetic invariants that separate the varieties
associated to different triples of trees; computing
dimensions of the resulting phylogenetic varieties;
and
using the disentangling number to reduce to trees with a small
number of leaves.  Symbolic computation also plays a
key role in handling the many different cases and finding
relevant phylogenetic invariants.
\end{abstract}

\section{Introduction}
\label{Introduction}


A phylogenetic model is a statistical model of the evolutionary relationships among a group of taxa. A standard feature of these models is a tree parameter which is meant to encode the common ancestry of the taxa under consideration. If a model produces a probability distribution consistent with observed data, one would like to be able to infer the true phylogeny from the probability distribution. In order for this to be possible, a distribution arising from the model must uniquely determine the tree parameter that produced it. In other words, one would like to be able to determine if the tree parameter of the model is \emph{identifiable}.

The identifiability of both the tree parameter and the continuous parameters has already been established for the basic models of character evolution \cite{Chang} as well as some of the more complex models \cite{Allman2008, Allman2008a,Allman2006}. The more complex phylogenetic models incorporate specific information about the mechanisms of evolution. For example, mixture models are designed to account for biological phenomena that result in data from more than one tree. 
A number of papers have examined the identifiability of mixture models with various restrictions on the topology of the tree parameters \cite{Allman2012, MM, MM2007, Rhodes}. Recent work on two-class mixture models has established the identifiability of the tree parameters for 2-class mixture of both the Jukes-Cantor and Kimuara 2-parameter  models \cite{Allman}. 
Our goal in the present paper is to extend the ideas from \cite{Allman} 
to larger class mixture
models, in particular, we extend the results for $3$-tree Jukes-Cantor mixtures.
Our main result is the following:

\begin{thm}\label{MainThm}
The tree parameters of the $3$-class Jukes-Cantor mixture model are
identifiable on trees with $\geq 6$ leaves.
\end{thm}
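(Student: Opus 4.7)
The plan is to combine the three algebraic-statistical ingredients advertised in the abstract: a disentangling reduction that cuts the infinite family of trees down to a finite case analysis, separating phylogenetic invariants for each pair of small-tree mixtures, and a dimension count that ensures generic identifiability.

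\textbf{Reduction via the disentangling number.} An $r$-class mixture on a tree $T$ restricts, on each leaf-subset $L$, to an $r$-class mixture on the induced subtrees. The disentangling number $d(r)$ is the smallest integer such that agreement of the triples $(T_1,T_2,T_3)$ and $(T_1',T_2',T_3')$ on every $d(r)$-leaf subset forces $\{T_1,T_2,T_3\}=\{T_1',T_2',T_3'\}$ as unordered multisets of trees. For three-class mixtures the expected value is $d(3)=6$, which is precisely the lower bound in the theorem. The first step is therefore to invoke (or reprove) the appropriate disentangling statement, thereby reducing Theorem~\ref{MainThm} to identifiability of the $3$-class Jukes--Cantor mixture on trees with exactly six leaves.

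\textbf{Separating invariants for six-leaf triples.} For each unordered pair of unordered triples of six-leaf trees $\{T_1,T_2,T_3\}$ and $\{T_1',T_2',T_3'\}$, I would exhibit a phylogenetic invariant — a polynomial in the joint probabilities — that vanishes on the mixture variety $V_{\{T_i\}}$ but does not vanish identically on $V_{\{T_i'\}}$. This is cleanest in Fourier/Hadamard coordinates, where the single-tree Jukes--Cantor variety is toric and the mixture variety becomes the join of three such torics. Candidate invariants can be generated by elimination, by exploiting the symmetry of the Jukes--Cantor group, or by combining low-degree edge and rank invariants of the factor trees; certification that an invariant separates one variety from another can be reduced to evaluating it on a generic symbolic parametrization of the second variety. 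A distribution lying in $V_{\{T_i\}}\cap V_{\{T_i'\}}$ would have to kill all such invariants simultaneously, and if this forces the two triples to coincide, identifiability at that pair follows.

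\textbf{Dimension and generic finiteness.} I would compute the dimension of each mixture variety and check it equals the number of free parameters (edge parameters on the three trees plus the independent mixing weights, modulo the $S_3$-action permuting the classes and any known Jukes--Cantor gauge freedom). Matching expected dimension together with the separating invariants above implies that the parametrization is generically injective up to class relabeling, giving identifiability on a Zariski-dense open set and in fact on the full probability simplex minus a measure-zero exceptional locus.

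\textbf{Main obstacle.} The overwhelming difficulty is the case explosion: the number of unordered triples of six-leaf binary trees is large, and one needs a separating invariant (or an ad hoc identifiability argument) for every pair of triples. Some pairs will have highly overlapping varieties, and for those the search for low-degree invariants may fail and force either a deeper structural argument or the use of higher-degree invariants whose computation is only feasible symbolically. Symbolic computation — Gr\"obner basis calculations, evaluations on random specializations, and exploiting the Fourier diagonalization of Jukes--Cantor — will be the practical engine that makes this case analysis tractable.
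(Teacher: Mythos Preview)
Your outline matches the paper's strategy in spirit --- disentangling to reduce to small trees, nondefectiveness of the join varieties, and separating invariants computed in Fourier coordinates --- but two points deserve correction.

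First, you misstate the role of the dimension computation. The theorem concerns only the tree parameters, not the continuous ones, so nondefectiveness is not used to certify generic injectivity of the full parametrization. Its purpose is simpler: since every six-leaf $3$-tree join variety is irreducible of the same dimension, any containment $V_T\subseteq V_S$ forces equality, so a single separating invariant per pair (rather than one in each direction) suffices.

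Second, your plan has a genuine feasibility gap that you identify but do not resolve. Reducing to six leaves via $D(3)=6$ still leaves roughly $2\times 10^{10}$ unordered pairs of six-leaf triples; direct enumeration and separation is hopeless. The paper's key computational maneuver is to work first at five leaves: compute the linear invariant space of every five-leaf triple, and find that (up to $S_5$) only $36$ pairs share the same linear space. Any inseparable six-leaf pair must restrict on every five-leaf subset to one of these, which cuts the six-leaf search to a few million candidates; of those, only $85$ (up to $S_6$) survive linear separation. For these last pairs the paper does not attempt raw elimination in the join ideal --- also infeasible --- but instead finds linear forms vanishing on two of the three single-tree varieties, so that on the mixture they involve only the parameters of the remaining tree; elimination among those much smaller expressions then produces the needed higher-degree invariants. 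Without these two reduction tricks your proposed case analysis cannot actually be carried out.
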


The proof of this main result will occupy the whole of the present paper and
uses tools from algebraic geometry and combinatorics as well as some heavy symbolic computation.

In Section \ref{Preliminaries}, we will demonstrate why algebraic geometry is the appropriate tool for studying these models by associating to each set of tree parameters an algebraic variety of the possible distributions arising from those trees. We will then show how the question of identifiability can be reduced to showing that for any two sets of tree parameters, the associated varieties are not contained in one another. To show the varieties are not contained in one another, it is  enough to show that their vanishing ideals are not contained in one another. Elements of the vanishing ideal are called \emph{phylogenetic invariants}, and isolating these invariants will be a key part of this proof.  

In Section \ref{Disentangling} we will investigate the combinatorial properties of binary leaf-labelled trees to show that it is not actually necessary to compare \emph{arbitrary} sets of tree parameters. Instead, we will be able to obtain general identifiability results for $n$-leaf trees by comparing mixtures on trees with $\leq 6$ leaves.  Thus, we will have a finite list of pairs of mixtures for which we must show the mutual noncontainment of their varieties.

In Section \ref{Fourier} we will introduce the Fourier-Hadamard transformation which will simplify the parameterization of our varieties by turning them into toric varieties. We will use these new coordinates in Section \ref{Dimension} to  show that for $n$-leaf trees, all of the varieties under consideration have the same dimension.  This observation halves the number of phylogenetic invariants we must find to separate two mixtures.

Finally, in Section \ref{Phylogenetic Invariants}, we will combine the results from the previous sections to construct a finite list of specific pairs of mixtures that we must consider.  We will then outline a method for finding phylogenetic invariants that distinguish these mixtures from one another and provide access to computations proving that they exist.  Many, but not all
 pairs of triplets of trees are separated by linear invariants.  For the triplets not separated by linear invariants, we will use the linear invariants in a novel way to construct separating invariants of higher degree. 
   
%

\section{Preliminaries}
\label{Preliminaries}

The tree parameter of a phylogenetic model of $k$-state character change on an $n$-taxon tree is a binary leaf-labelled tree $T$ with label set $[n]$. To each node $v$ of the tree, we associate a random variable $X_v$ that can take on any of the $k$ characters. For the unique edge $e$ between $v$ and $w$, the matrix $A^e$ is the  matrix with $(i,j)$-th entry equal to $P(X_v= i | X_w = j)$. In other words, this matrix encodes the probability of a character transition along the edge $e$.  Note that there are $k^n$ possible character states at the leaves, and we can calculate the probability of observing a particular state by summing over all possible states of the internal nodes. 
The continuous parameters of such a model are the entries of these matrices,
 and for each choice of parameters we obtain a probability distribution at the 
 leaves. Therefore, we have a map from the stochastic parameters $\Theta_{T}$ into the probability simplex $\Delta^{k^n-1}$, 
$$
\psi_T : \Theta_T \rightarrow \Delta^{k^n-1}  \subseteq  \rr^{k^{n}}.
$$
Since every element of ${\rm im}(\psi_T$) is a probability distribution, all of the entries must be between zero and one and must sum to one. 
Each coordinate function of $\psi_T$ is a polynomial map in the continuous parameters. The degree of each polynomial is equal to the number of edges in the tree parameter and the number of terms is equal to $k$ raised to the number of internal vertices.  Therefore, if we ignore the restrictions on the domain and range and simply regard $\psi_T$ as a complex polynomial map,  
$\overline{{\rm im}(\psi_T)} = V_T$ is an algebraic variety.

An $r$-class mixture model enlarges the space of possible distributions by taking $r$ tree parameters as input and introducing $r-1$  \emph{mixing parameters}.  The mixing parameters weight the distribution from each of the trees according
 to the proportion of data arising from that tree. 
Note that the underlying trees might not be distinct. 
This could be the case where the tree topologies are the same, but the entries of the transition matrices are not the same along correpsonding edges. 
Just as before, with fixed tree parameters, we have a map that takes a choice of continuous parameters for each tree and a choice of mixing parameters and sends them to a probability distribution. As an example, consider the map for a  3-class mixture, which will be our primary object of interest.
$$
\psi_{T_1,T_2,T_3}: \Theta_{T_{1}} \times \Theta_{T_{2}} \times \Theta_{T_{3}} 
\times \Delta^{2} \rightarrow \Delta^{k^n - 1} 
$$
where 
$$
(s_1, s_2, s_3, \pi) \mapsto \pi_1 \psi_1(s_1) + \pi_2 \psi_2(s_2) + \pi_{3}\psi_3(s_3).
$$
Here, $\pi = (\pi_{1}, \pi_{2}, \pi_{3}) \in \Delta^{2}$ is the vector of
 mixing parameters.  Again, regarded as a complex polynomial map, $\overline{{\rm im}(\psi_{T_1,T_2, T_3})}$ is an algebraic variety. 
 In fact, 
$$
\overline{{\rm im}(\psi_{T_1,T_2, T_3})}= V_{T_1}*V_{T_2}*V_{T_3},
$$
where $V_{T_1}*V_{T_2}*V_{T_3}$ denotes the \emph{join} variety of $V_{T_1}, V_{T_2},$ and $V_{T_3}$. 

Before we formally define the concept of identifiability for $r$-class mixtures, we will introduce some convenient notation.  Let $\mathcal{T}_{X}$ be the set of trivalent leaf-labelled trees with label set $X$ and let 
$\mathcal{T}_{X,r}$ be the set of unordered lists of elements of $\mathcal{T}_{X}$ of  length $r$ (i.e.~$r$ element multisets). 
Note that as in our mixture models, for $T= (T_1, \ldots, T_r)$, the trees in $T$ are not necessarily distinct. 
We will now write $ \psi_T := \psi_{T_1, \ldots, T_r }.$

\begin{defn} 
\label{Identifiability}
\cite{Allman}
The tree parameters of an $r$-tree mixture model are \emph{generically identifiable} for $n$ leaf trees if for all $T \in \mathcal{T}_{[n],r}$ and generic choices of $(s_1, \ldots, s_r, \pi) \in \Theta_{T_{1}} \times \cdots
\times  \Theta_{T_{r}} \times \Delta^{r-1}$,
if there is a $T'$ and 
$(s'_1, \ldots, s'_r, \pi') \in \Theta_{T'_{1}} \times \cdots
\times  \Theta_{T'_{r}} \times \Delta^{r-1}$ such that
$$
\psi_{T}(s_1, \ldots, s_r, \pi) = 
\psi_{T'}(s'_1, \ldots, s'_r, \pi')
$$
then $T = T'.$
\end{defn}

In this paper we will not need such generality, as we will specifically consider the 3-class Jukes-Cantor mixture model. The Jukes-Cantor model is a 4-state character substitution model of DNA evolution, with states A,C,G, and T corresponding to the DNA bases. We assume equal transition probabilities between characters, so the transition matrices have the form
$$
\begin{pmatrix}
\alpha  & \beta & \beta & \beta \\
\beta  & \alpha & \beta & \beta \\
\beta  & \beta & \alpha & \beta \\
\beta  & \beta & \beta & \alpha \\
\end{pmatrix}.
$$
Because the entries of each row of the transition matrix must sum to one, we essentially have one parameter along each edge, though we will often ignore this in order to homogenize the parameterization. In this context, we can think of the parameter value on an edge as representing edge length, with greater $\beta$ values corresponding to a higher probability of mutation and longer branches.

In order to prove Theorem \ref{MainThm}, we will translate this statement about identifiability into one about algebraic varieties.
\begin{lemma}
\label{noncontainment}
\cite{Allman}
Suppose $T_1, T_2, T_3, S_1, S_2,$ and $S_3$ are binary, $n$-leef trees, not necessarily distinct, then for the 3-tree Jukes-Cantor mixture model, 
$V_{T_1} * V_{T_2} * V_{T_3} \not \subseteq V_{S_1} * V_{S_2} * V_{S_3}$ and 
$V_{S_1} * V_{S_2} * V_{S_3} \not \subseteq V_{T_1} * V_{T_2} * V_{T_3}$ implies that the set of stochastic parameters mapping into $V_{S_1} * V_{S_2} * V_{S_3}  \cap V_{T_1} * V_{T_2} * V_{T_3}$ is a set of Lebesgue measure 0.
\end{lemma}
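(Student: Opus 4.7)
The plan is to use a standard transfer principle between algebraic geometry and measure theory: if the join variety $V_T := V_{T_1}*V_{T_2}*V_{T_3}$ is not contained in $V_S := V_{S_1}*V_{S_2}*V_{S_3}$, then the defining equations of $V_S$ pull back along the polynomial parameterization $\psi_T$ to a nontrivial collection of polynomial equations on the parameter space. Their common zero locus is a proper algebraic subset of that parameter space, and hence Lebesgue-negligible.

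First I would denote the full $T$-side parameter domain by $\Theta_T := \Theta_{T_1}\times\Theta_{T_2}\times\Theta_{T_3}\times\Delta^2$ and recall that, by construction, $V_T$ is the Zariski closure of $\psi_T(\Theta_T)$. Let $I(V_S)$ be the vanishing ideal of $V_S$. The preimage $\psi_T^{-1}(V_S)$ coincides with the common zero locus of the finite collection of polynomials obtained by pulling back a generating set of $I(V_S)$ through $\psi_T$. If every pulled-back polynomial vanished identically on $\Theta_T$, then $\psi_T(\Theta_T) \subseteq V_S$, and taking Zariski closure would yield $V_T \subseteq V_S$, contradicting the hypothesis. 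Hence at least one pulled-back polynomial is a nonzero function on $\Theta_T$, so $\psi_T^{-1}(V_S)$ lies inside the zero set of a nonzero polynomial. Since $\Theta_T$ is a product of probability simplices and stochastic-matrix polytopes, it has nonempty interior in its natural affine hull, and the vanishing set of a nonzero polynomial there has Lebesgue measure zero. Because $\psi_T$ already lands in $V_T$, we have $\psi_T^{-1}(V_T \cap V_S) = \psi_T^{-1}(V_S)$, which gives the conclusion for parameters on the $T$-side; interchanging the roles of $S$ and $T$ and invoking $V_S \not\subseteq V_T$ handles parameters on the $S$-side identically.

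The argument is essentially a formal unwinding of the definition of the join variety as the Zariski closure of the image of $\psi_T$. The only slightly subtle point is the final measure-zero inference on the semialgebraic domain $\Theta_T$, which follows from the standard fact that a nonzero polynomial vanishes on a set of Lebesgue measure zero in any region of full dimension in its ambient Euclidean space; I do not anticipate any genuine obstacle here. The hard work of the paper thus lies not in this lemma but in \emph{verifying} its hypotheses for all relevant triples of trees, which is where the dimension computations and phylogenetic invariants of later sections come in.
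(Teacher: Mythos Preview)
Your argument is correct and is precisely the standard transfer-of-noncontainment-to-measure-zero reasoning that underlies this lemma. Note, however, that the present paper does not supply its own proof of Lemma~\ref{noncontainment}: it is imported from \cite{Allman} and stated without proof here, so there is no in-paper argument to compare yours against. Your write-up is essentially the argument one finds in that reference (and in the broader algebraic-statistics literature): pull back generators of $I(V_S)$ along the polynomial map $\psi_T$, observe that identically-vanishing pullbacks would force $V_T\subseteq V_S$ by Zariski closure, and conclude that the bad parameter set lies in the zero locus of a nonzero polynomial on a full-dimensional semialgebraic domain, hence has Lebesgue measure zero. The only point worth making slightly more explicit is the passage between the real stochastic domain $\Theta_T$ and the complex parameter space used to define $V_T$: since $\Theta_T$ has nonempty interior in its affine hull, a polynomial vanishing on all of $\Theta_T$ vanishes identically on the complexified parameter space, which is what forces $\psi_T(\mathbb{C}^N)\subseteq V_S$ and hence $V_T\subseteq V_S$. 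You implicitly use this, and it causes no trouble.
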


This algebraic characterization means that we are able to obtain results about the stochastic parameters by working with complex varieties and all the tools thereof. One strategy for proving generic identifiability of the 3-class Jukes-Cantor mixture model on $n$-taxa is then clear. We can simply list all elements of $\mathcal{T}_{[n],3}$ (which we will call
 \emph{n-leaf triplets}) and generate the corresponding varieties.
By Lemma \ref{noncontainment}, if we can show that any two of these varieties are mutuallly noncontained, then we will have established identifiability in the $n$-leaf case. As alluded to in the introduction, we will actually want to look at elements of $I(V_{T_1} * V_{T_2} * V_{T_3})$, which we call the
 phylogenetic invariants of $V_{T_1} * V_{T_2} * V_{T_3}$, (or occasionally just phylogenetic invariants of $T= \{T_1, T_2, T_3\}$, or of the 
mixture model). 
Therefore, for each $(S,T) \in \mathcal{T}_{[n],r}\times \mathcal{T}_{[n],r}$ with $S \not = T$, we need to find an invariant of $T$ that is not an invariant of 
$S$, and vice versa. Once we have done this for a specific pair, we will say that we have \emph{separated} $S$ and $T$. 

This gives us a clear procedure for determining identifiability, but with some obvious drawbacks. First, the 
number of binary phylogenetic trees on $n$-taxa is $(2n - 5)!! = 1\cdot 3 \cdot 5 \cdots  (2n-5)$, which makes generating all possible 
3-tree mixtures computationally prohibitive even for relatively small $n$. 
Secondly, on the face of it, this brute force approach does not seem to offer any way of establishing generic identifiability of the model for arbitrary
 $n$. However, as we will see in the next section, it is possible to establish generic identifiability of the 3-tree Jukes-Cantor mixture model for all $n$ by separating only a finite number of mixtures.

%

\section{Disentangling 3-Tree Mixtures}
\label{Disentangling}

In this section we explain how to use trees with few leaves to establish identifiability.
The size of the trees we need to consider is bounded by the 
disentangling number for phylogenetic mixtures.
For $T \in \mathcal{T}_{X}$ and $K \subset X$, let $T_{| K}$ be the tree obtained by supressing all degree two vertices in the subtree of $T$ induced by the leaves labelled by $K$. For  $T = (T_1, \ldots, T_r) \in \mathcal{T}_{X,r} $, $T_{ |K} = (T_{1|K}, \ldots, T_{r | K} ). $

\begin{ex}
Consider
$T \in \mathcal{T}_{[8]}$ pictured below and $K = \{ 2,3,5,7,8 \}$.

\begin{minipage}{.5\textwidth}
  \centering
\includegraphics[width=.47\linewidth]{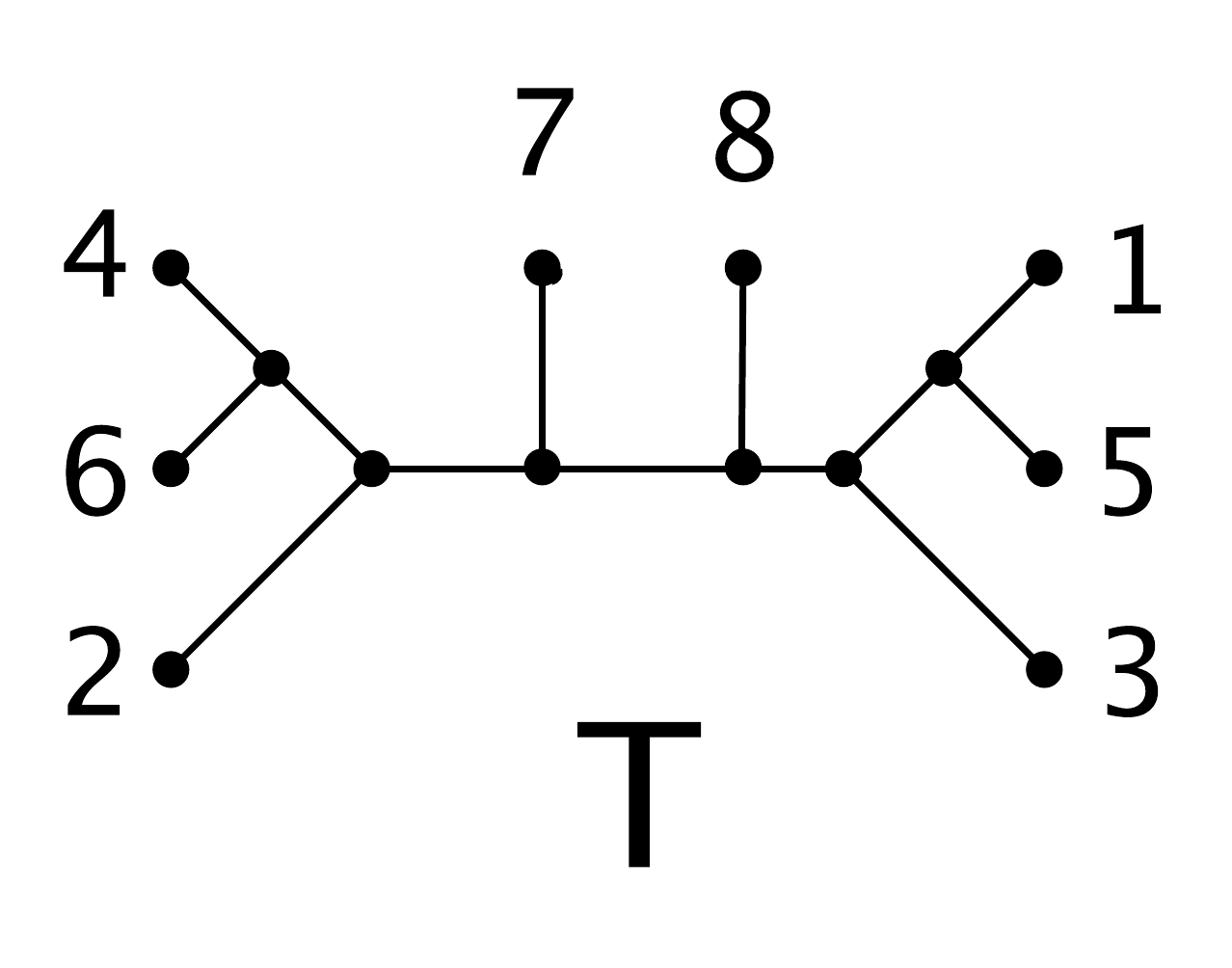}
\end{minipage}
\begin{minipage}{.5\textwidth}
  \centering
  \includegraphics[width=.45\linewidth]{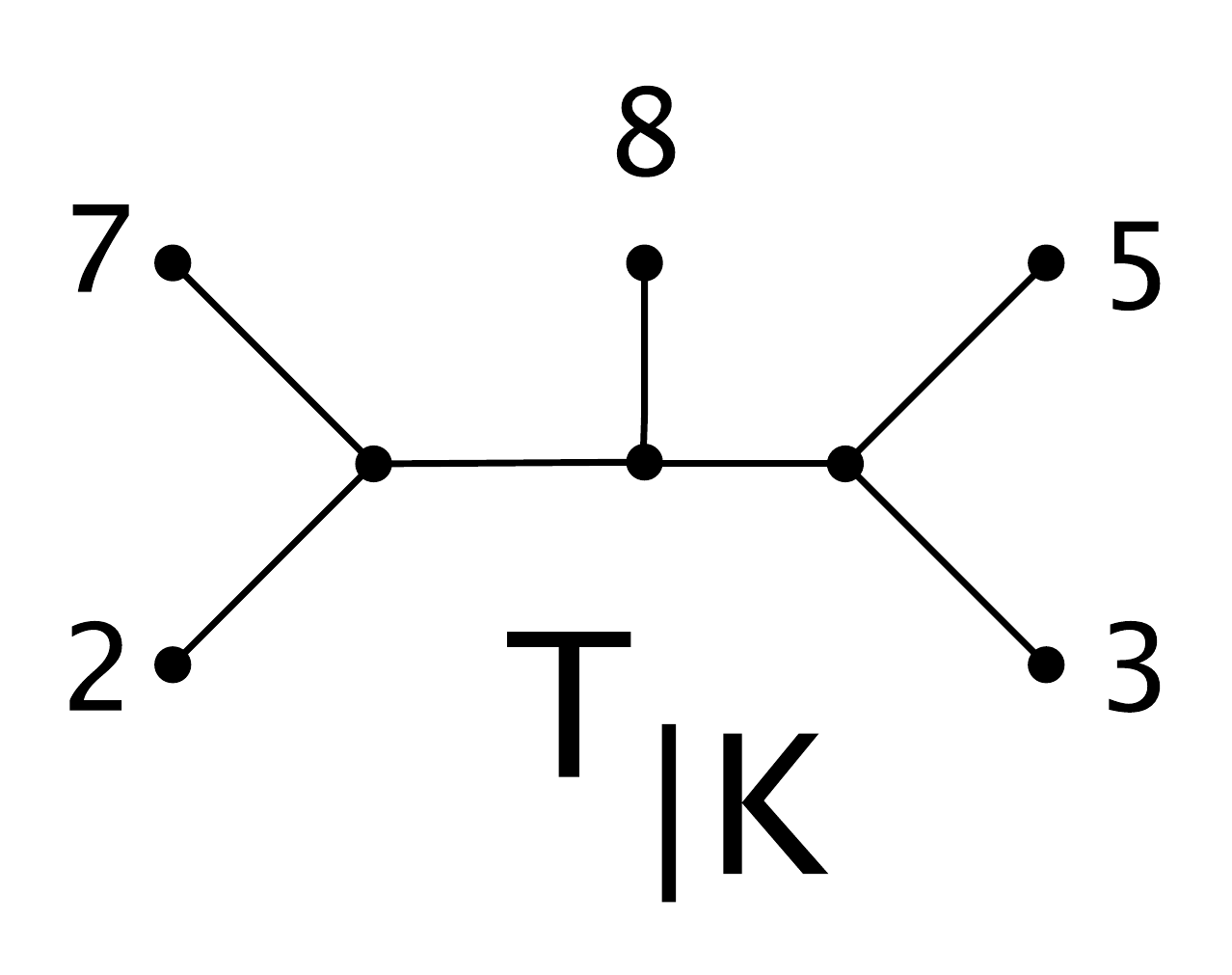}
\end{minipage}
\end{ex}

\begin{defn}
Let $ S, T \in \mathcal{T}_{X , r} $ with $ S \not = T $. A subset 
$ K \subseteq X $ is said to 
disentangle $ S $ and $ T $ if $ S_{ | K} \not = T_{  | K}.$  Let 
$ d ( S ,  T) $ be the cardinality of the
 minimum disentangling set of $S$ and $T$. The disentangling number $ D (r) $ is 
$$
D(r) = \displaystyle \max_{n \in \nn} \displaystyle \max_{ S \not = T \in \mathcal{T}_{ [n] , r } } d( S, T)
$$
\end{defn}
 
\noindent

The following lemma \cite{Allman} motivates our interest in the disentangling number. 

\begin{lemma}
\label{marginalization}
Let $S,T \in \mathcal{T}_{[n],3}$ and $K \subseteq [n]$. If 
$V_{T_{1|K}}*V_{T_{2|K}}*V_{T_{3|K}} \not \subseteq V_{S_{1|K}}*V_{S_{2|K}}*V_{S_{3|K}}$ then $V_{T_1 }*V_{T_2  }*V_{T_3 } \not \subseteq V_{S_1  }*V_{S_2  }*V_{S_3 }.$
\end{lemma}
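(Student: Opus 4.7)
The plan is to prove the contrapositive: assume $V_{T_1} * V_{T_2} * V_{T_3} \subseteq V_{S_1} * V_{S_2} * V_{S_3}$ and deduce $V_{T_{1|K}}*V_{T_{2|K}}*V_{T_{3|K}} \subseteq V_{S_{1|K}}*V_{S_{2|K}}*V_{S_{3|K}}$. The key tool is the \emph{marginalization map} $\mu_K \colon \cc^{k^n} \to \cc^{k^{|K|}}$, which sends a joint (pseudo-)distribution on the leaves $[n]$ to its marginal on $K$ by summing out the coordinates indexed by $[n] \setminus K$. This map is linear and surjective.

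First I would verify compatibility with a single tree: for any binary $n$-leaf tree $U$ one has $\overline{\mu_K(V_U)} = V_{U_{|K}}$. For the forward direction, if $s \in \Theta_U$, then $\mu_K(\psi_U(s)) = \psi_{U_{|K}}(s')$, where each transition matrix of $s'$ is the product of the transition matrices of $s$ along the path in $U$ that collapses to the corresponding edge of $U_{|K}$, and the pendant edges to suppressed leaves contribute factors of one by stochasticity. The product of Jukes--Cantor matrices is again Jukes--Cantor, so $s' \in \Theta_{U_{|K}}$. For the reverse, any $s' \in \Theta_{U_{|K}}$ lifts to $s \in \Theta_U$: for each edge $e'$ of $U_{|K}$, place $s'(e')$ on one edge of the corresponding path in $U$ and the identity Jukes--Cantor matrix ($\alpha = 1$, $\beta = 0$) on the remaining edges of that path, and put any Jukes--Cantor matrix on the pendant edges at suppressed leaves. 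Thus $\mu_K(\mathrm{im}(\psi_U)) = \mathrm{im}(\psi_{U_{|K}})$, and taking Zariski closures (using $V_U = \overline{\mathrm{im}(\psi_U)}$) gives $\overline{\mu_K(V_U)} = V_{U_{|K}}$.

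Next I would promote this to joins. Since $\mu_K$ is linear, it sends a point $\pi_1 v_1 + \pi_2 v_2 + \pi_3 v_3$ with $v_i \in V_{T_i}$ to $\pi_1 \mu_K(v_1) + \pi_2 \mu_K(v_2) + \pi_3 \mu_K(v_3)$, which by the single-tree step lies in $V_{T_{1|K}}*V_{T_{2|K}}*V_{T_{3|K}}$. Zariski continuity then gives $\overline{\mu_K(V_{T_1}*V_{T_2}*V_{T_3})} \subseteq V_{T_{1|K}}*V_{T_{2|K}}*V_{T_{3|K}}$, and the lifting direction, applied componentwise, yields the reverse inclusion. The same holds with the $S_i$ in place of the $T_i$.

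Applying $\mu_K$ to the assumed containment and taking closures then produces the chain
$$
V_{T_{1|K}}*V_{T_{2|K}}*V_{T_{3|K}} \subseteq \overline{\mu_K(V_{T_1}*V_{T_2}*V_{T_3})} \subseteq \overline{\mu_K(V_{S_1}*V_{S_2}*V_{S_3})} \subseteq V_{S_{1|K}}*V_{S_{2|K}}*V_{S_{3|K}},
$$
completing the contrapositive. The main obstacle is the explicit lifting of Jukes--Cantor parameters in the single-tree step; everything else is formal bookkeeping with linear maps, joins, and Zariski closures.
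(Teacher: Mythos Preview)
Your argument is correct and is the standard marginalization proof. Note, however, that the paper does not actually give its own proof of this lemma: it simply cites \cite{Allman} and moves on. What you have written is essentially a reconstruction of the argument one finds in that reference---take the linear marginalization map $\mu_K$, verify $\overline{\mu_K(V_U)}=V_{U_{|K}}$ for a single tree via the product-of-transition-matrices computation (using that Jukes--Cantor matrices are closed under multiplication), extend to joins by linearity, and chase the containment through. One small remark: in your lifting step you place the identity matrix ($\alpha=1$, $\beta=0$) on some edges, which lies on the boundary of the stochastic parameter space; this is harmless here because the paper defines $V_U$ as the Zariski closure of the image of the \emph{complex} polynomial map $\psi_U$, so boundary (and indeed arbitrary complex) parameters are permitted.
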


\noindent
Now suppose we are able to show identifiability for trees with $D(3)$ leaves. Then given any two mixtures $S,T$ on $n > D(3)$ leaves we can find some $K \subset [n]$ such that $T_{|K} \not = S_{|K}$ and $V_{T_{1|K}}*V_{T_{2|K}}*V_{T_{3|K}} \not \subseteq V_{S_{1|K}}*V_{S_{2|K}}*V_{S_{3|K}}$. By Lemma \ref{marginalization}, in doing so we will have 
separated $S$ and $T$. Consequently, we would have generic identifiability of the tree parameter of the 3-class Jukes-Cantor mixture model for all $n \geq D(3)$. Thus, as promised, we will have an upper bound on the number of possible varieties we need to consider. In this section we provide some general background on the disentangling number and prove that $D(3)=6$.

The \emph{rooted} disentangling number, $ RD(r) $, is defined analogously for rooted trees. We will omit the short proof of this lemma  from \cite{DisentanglingNumber} that 
relates $RD(r)$ and  $D(r)$.

\begin{lemma} 
The disentangling and rooted disentangling numbers satisfy: $ D(r) \leq RD(r) + 1$.
 \end{lemma}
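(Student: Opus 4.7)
The plan is to exploit the straightforward bijection between unrooted $n$-leaf trees and rooted trees on $n-1$ labelled leaves induced by ``rooting at a chosen leaf.'' Given any two distinct unrooted mixtures $S \neq T$ in $\mathcal{T}_{[n],r}$, if we root every tree in both $S$ and $T$ at a single fixed leaf $\ell \in [n]$, then $\ell$ becomes the root and the remaining leaves are labelled by $[n] \setminus \{\ell\}$. Because this rooting procedure is a bijection between $\mathcal{T}_{[n]}$ and the rooted analogue $\mathcal{T}^{\mathrm{rt}}_{[n]\setminus\{\ell\}}$, and it commutes with taking multisets of length $r$, the hypothesis $S \neq T$ gives us two distinct rooted mixtures $S^{\ell}, T^{\ell} \in \mathcal{T}^{\mathrm{rt}}_{[n]\setminus\{\ell\}, r}$.

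Now I would apply the definition of $RD(r)$ to the pair $(S^{\ell}, T^{\ell})$: there exists a subset $K' \subseteq [n]\setminus\{\ell\}$ with $|K'| \leq RD(r)$ such that $S^{\ell}_{|K'} \neq T^{\ell}_{|K'}$ as rooted mixtures. Set $K = K' \cup \{\ell\} \subseteq [n]$, so $|K| \leq RD(r) + 1$. The next step is the key compatibility check: rooting and induced-subtree-restriction should commute, in the sense that for any tree $U$ with leaf $\ell$ and any $K' \subseteq [n]\setminus\{\ell\}$,
$$
\bigl(U_{|K' \cup \{\ell\}}\bigr)^{\ell} \;=\; \bigl(U^{\ell}\bigr)_{|K'}.
$$
This is essentially immediate from the definitions, since both operations only suppress internal degree-two vertices in the subtree induced by a specified leaf set, and including $\ell$ on the unrooted side corresponds to designating it as the root on the rooted side. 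Applying this identity componentwise to each tree in the multiset $S$ and each in $T$, we obtain $(S_{|K})^{\ell} = S^{\ell}_{|K'}$ and $(T_{|K})^{\ell} = T^{\ell}_{|K'}$, which are unequal by choice of $K'$.

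Since rooting at $\ell$ is a bijection, the unrooted mixtures $S_{|K}$ and $T_{|K}$ themselves must be distinct, so $K$ disentangles $S$ and $T$ in the unrooted sense. This gives $d(S,T) \leq |K| \leq RD(r) + 1$, and taking the maximum over all choices of $n$ and of $S \neq T \in \mathcal{T}_{[n],r}$ yields $D(r) \leq RD(r) + 1$.

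The only real content here is the commutativity of rooting and induced-subtree restriction; everything else is bookkeeping. That commutativity is where all of the combinatorial care is hidden, especially in verifying that suppressing degree-two vertices in the unrooted induced subtree produces the same tree (after rooting at $\ell$) as first rooting and then restricting to $K'$. I do not expect any genuine difficulty — this is the reason the authors remark that the short proof is omitted — but it is the one place where a careless definition of ``restriction'' could derail the argument.
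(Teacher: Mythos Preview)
Your argument is correct and is exactly the standard rooting-at-a-leaf argument that the paper has in mind; the paper itself omits the proof and defers to \cite{DisentanglingNumber}, where the same idea appears. The commutativity of rooting with restriction that you flag as the only real content is indeed routine, and your bookkeeping is sound.
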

 
The main result of \cite {DisentanglingNumber} is the following theorem from which we obtain an upper bound on $D(r)$ as an immediate corollary. 

\begin{thm}
\label{Rooted}
$ RD(r) =  3( \lfloor \text{log}_2(r) \rfloor + 1 )$ .
 \end{thm}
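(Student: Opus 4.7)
The plan is to prove the two inequalities $RD(r) \leq 3(\lfloor \log_2 r \rfloor + 1)$ and $RD(r) \geq 3(\lfloor \log_2 r \rfloor + 1)$ separately.

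For the upper bound, I would proceed by strong induction on $r$. The base case $r=1$ follows from the classical fact that every rooted binary tree is determined by its set of restrictions to $3$-element subsets of leaves, so two distinct such trees must disagree on some triple; hence $RD(1) \leq 3$. For the inductive step, let $S, T \in \mathcal{T}_{X,r}$ be distinct. If some triple $\{a,b,c\} \subseteq X$ has its induced multiset of rooted-triple topologies differing between $S$ and $T$, then $\{a,b,c\}$ alone is a disentangling set of size $3$. Otherwise, fix any triple $\{a,b,c\}$ and partition both $S$ and $T$ into sub-multisets $S^{(\tau)}, T^{(\tau)}$ indexed by the three possible rooted triple topologies $\tau$; the agreement hypothesis forces $|S^{(\tau)}| = |T^{(\tau)}|$ for each $\tau$. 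Because $S \neq T$, at least one pair $S^{(\tau)} \neq T^{(\tau)}$ disagrees. The goal is to choose $\{a,b,c\}$ so that this disagreeing sub-multiset has size at most $\lfloor r/2 \rfloor$; then the inductive hypothesis furnishes a disentangling set for it of size at most $3(\lfloor \log_2 \lfloor r/2 \rfloor \rfloor + 1) \leq 3 \lfloor \log_2 r \rfloor$, and combining with $\{a,b,c\}$ yields the claimed bound.

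The principal obstacle will be the case in which every triple admits a ``dominant'' topology appearing in more than $r/2$ of the trees, so that the only disagreeing sub-multiset at each triple might be the large one. To handle this, I would exploit the rigidity imposed by a universally dominant topology: if more than half of the trees in $S$ share a common rooted-triple structure on every triple, they must share a large common skeletal structure, and a careful argument should show that a small number of additional leaves suffice to produce a triple at which the majority topology covers at most $\lfloor r/2 \rfloor$ trees. Making this precise is the main combinatorial content of the proof, and the tightness of the final constant hinges on it.

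For the lower bound, I would construct, for each $r = 2^k$, explicit $r$-tuples $\mathcal{S}_k, \mathcal{T}_k$ on a leaf set of size $3(k+1)$ satisfying $\mathcal{S}_k \neq \mathcal{T}_k$ but $\mathcal{S}_k|_K = \mathcal{T}_k|_K$ for every proper subset $K$ of the leaf set. The construction is recursive: the base case $k=0$ uses two distinct rooted binary trees on $3$ leaves. At step $k$, introduce $3$ new leaves and graft in two relabelled copies of the step-$(k-1)$ construction so that (a) $\mathcal{S}_k$ and $\mathcal{T}_k$ remain distinct as multisets, yet (b) deleting any single leaf collapses the restricted mixture into a case absorbed by the level-$(k-1)$ agreement. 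Verifying (b) is the main combinatorial check and will likely require a case analysis on which leaf is removed and on how the induced restriction interacts with the recursive grafting; the inequality for arbitrary $r$ then follows by padding with duplicate trees up to the next power of two.
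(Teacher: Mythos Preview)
Your inductive framework for the upper bound matches the paper's, and your identification of the ``principal obstacle'' is on target. The gap is that your proposed resolution---the vague ``rigidity'' argument about dominant topologies forcing a common skeletal structure---is neither concrete nor needed, and is not what the paper does.

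The paper's key move, which you are missing, is in the \emph{choice} of the triple. Rather than fixing an arbitrary $\{a,b,c\}$ and then trying to control which class disagrees, first pick specific trees $S_i \in S$ and $T_j \in T$ with $S_i \neq T_j$ (such a pair exists since $S \neq T$), and take $K$ to be any triple on which the single trees $S_i$ and $T_j$ already differ; this exists by the base case $RD(1)=3$. In the case $S_{|K} = T_{|K}$, this common multiset then automatically contains at least two distinct rooted-triple topologies, because $S_{i|K}$ and $T_{j|K}$ are distinct and both occur. Hence some nonempty class $L_S^{l}$ has size $r' \leq r/2$, and the induction is applied to that class, with $K \cup K'$ as the disentangling set. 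No global structural analysis of ``dominant topologies across all triples'' is required at all. (There remains the small point of checking that the chosen small class actually differs between $S$ and $T$; if it does not, one may strip that matching class from both sides and recurse on the strictly shorter lists, which is elementary once $K$ has been chosen as above.)

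For the lower bound the paper gives no argument of its own but cites the construction of Humphries, which is indeed a recursive doubling construction of the kind you sketch; so your plan there is in the right spirit, though the verification that no proper subset of the leaf set disentangles the two $r$-tuples is the substantive content and would still need to be written out.
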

 
 \begin{cor}
For $ r \in \nn ,  D(r) \leq  3( \lfloor \text{log}_2(r) \rfloor + 1 )  +1 $.
\end{cor}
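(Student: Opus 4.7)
The plan is to obtain the bound by directly combining the two results that immediately precede the corollary, namely the inequality $D(r) \leq RD(r)+1$ relating the disentangling number to its rooted counterpart, and the exact formula $RD(r) = 3(\lfloor \log_2 r \rfloor + 1)$ from Theorem \ref{Rooted}. There is essentially no additional content to supply, since the corollary is stated as an immediate consequence.

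Concretely, I would proceed in one step: fix $r \in \nn$, apply the lemma to get $D(r) \leq RD(r) + 1$, then substitute the expression for $RD(r)$ from Theorem \ref{Rooted} into the right-hand side. This yields $D(r) \leq 3(\lfloor \log_2 r \rfloor + 1) + 1$, which is exactly the claim.

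There is no real obstacle; any difficulty has already been absorbed into the proofs of Theorem \ref{Rooted} (done in \cite{DisentanglingNumber}) and the short lemma $D(r) \leq RD(r) + 1$. In the present paper the corollary will only be used for $r=3$, where it gives $D(3) \leq 3(\lfloor \log_2 3 \rfloor + 1) + 1 = 3 \cdot 2 + 1 = 7$; the subsequent sharper statement $D(3)=6$, promised in the text, must be obtained by a separate argument and is not part of this corollary.
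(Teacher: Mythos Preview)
Your proposal is correct and matches the paper's treatment: the corollary is stated as an immediate consequence of the lemma $D(r)\le RD(r)+1$ together with Theorem~\ref{Rooted}, and the paper gives no further argument. Your remark about the case $r=3$ is also accurate.
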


The original proof Theorem \ref{Rooted} is obtained by encoding multisets of
trees as high-dimensional contingency tables and applying 
results about marginal maps. We provide an alternative, 
and hopefully more direct proof by examining the tree topologies directly. 
 
\begin{proof}[Proof of Theorem \ref{Rooted}  ] A construction in \cite{Humphries} shows that $  3( \lfloor \text{log}_2(r) \rfloor + 1 ) \leq RD(r) $, so we need only show
that for every pair $ S, T \in \mathcal{T}_{X , r } $ there is a disentangling set of cardinality less than or equal
to $  3( \lfloor \text{log}_2(r) \rfloor + 1 ) $. We will proceed by induction on $ r $. Because a rooted tree is determined by its rooted triples
 (\cite[Theorem 6.4.1]{SempleandSteel}), the base case $ RD(1) = 3 $ is established. Assume this is true
 for all integers less than $ r $ and let $ S = ( S_1, \ldots, S_r )$ and $ T = ( T_1, \ldots, T_r ) $ 
 be two unordered lists of rooted binary trees with $ S \not = T $. There
 must exists some $ S_i $ and $ T_j $ such that $ S_i \not = T_j $. By our inductive assumption, 
 we can permute the leaf labels so that for $ K = \{1, 2, 3 \} $, $ T_{ i | K } \not = S_{ j | K} $.
 There are only three topologically distinct rooted leaf-labelled binary trees with label set $K$, which we will
 label $ t_1, t_2, $ and $ t_3 $. 
 
If $ S_{ | K } \not = T_{ | K } $, then $ S $ and $ T $ are disentangled and we are done. 
Otherwise, $ S_{ | K } = T_{ | K } $ is an unordered list of the trees $ t_1, t_2, $ and $ t_3 $
occurring with multiplicity. Partition $ S $ into three multisets,
$$
L_S^l : = \{ S_j \in S |  S_{j | K} = t_l \} 
$$

\noindent
for $ 1 \leq l  \leq 3 $,
and likewise for $T$. Since $K$ was chosen to disentangle
an element of $S$ from an element of $T$, it must be the case
 that $ S_{ | K } $ and $ T_{ | K } $
contain at least two distinct three leaf trees. Therefore, we can 
choose $l$ so that
 $ L_S^{l} $ is  nonempty and $ | L_S^{l} | = r' \leq  \frac{r}{2} $. 
 Since $ S_{ | K } \not = T_{ | K } $,
$ | L_S^l | =  | L_T^l | $ and we can consider  $L_T^l$  and $L_S^l$ as elements of 
 $\mathcal{T}_{X,r'} $. By our inductive assumption,
 there exists a disentangling set $K'$ such that  
 \begin{align*}
 |K'| & \leq  3( \lfloor \text{log}_2(r') \rfloor + 1 ) \\
 & \leq 3( \lfloor \text{log}_2(\frac{r}{2}) \rfloor + 1 ) \\
 & = 3(( \lfloor \text{ log}_2(r) \rfloor - 1 ) + 1 ) \\
 & = 3( \lfloor \text{log}_2(r) \rfloor).
 \end{align*}
Therefore, $ |K \cup K'| \leq  3( \lfloor \text{log}_2(r) \rfloor + 1 ) $. We claim that this set disentangles $S$ and $T$.  Since $K'$ disentangles $L^l_T$ from $L^l_S$, and
 $K' \subseteq K \cup K'$,
  $(L^l_S)_{| K \cup K' } \not = (L^l_T)_{| K \cup K' } $. If $S$ and $T$ are still entangled, then there must be  some tree in $(L^l_S)_{| K \cup K'}$ equal to some tree in  $(L^m_T)_{| K \cup K'}$ with $l \not = m$. But since $K \subseteq K \cup K'$, this is impossible, so $K \cup K'$ disentangles $S$ and $T$.
\end {proof}

While this assures us that $D(3) \leq 7$, we can actually reduce this bound slightly, vastly
reducing the number of 3-tree mixtures we need to consider. For the theorem and proof 
that follow, we will make use of the following definition. 

\begin{defn}
For  $T \in \mathcal{T}_{[n]} $ and $K$ a three element subset of $ [n] $, if $T$ has 
a split that separates $K$ from  $[n] \setminus K $ then
$T_{|K}$ is a \emph{cluster on K}.
\end{defn}

\begin{defn}
Let $ S, T \in \mathcal{T}_{X , r} $ and $K $  a subset of $ X $ that does not disentangle $S$ and $T$.
Label the trees of $S$ and $T$ so that $S = (S_1, \ldots, S_r)$ and $T = (T_1, \ldots, T_r)$.
For $1 \leq m \leq r$, let
$$
 m_i = \text{min} (\{ m \in [r]\setminus \{ m_1, \ldots, m_{i-1} \} : S_{m | K } = T_{ i  | K}\} )
$$
Then with respect to the chosen labeling, we say that $S_{m_i} $ and $T_i$ are \emph{partners} at $K$.
\end{defn}

Notice that each tree of $T$ has exactly one partner at $K$, and that the partnered trees at $K$
are exactly the same if we swap the roles of $S$ and $T$ in the definition.

\begin{figure}
\centering
\begin{minipage}{.49\textwidth}
  \centering
  \includegraphics[width=.3\linewidth]{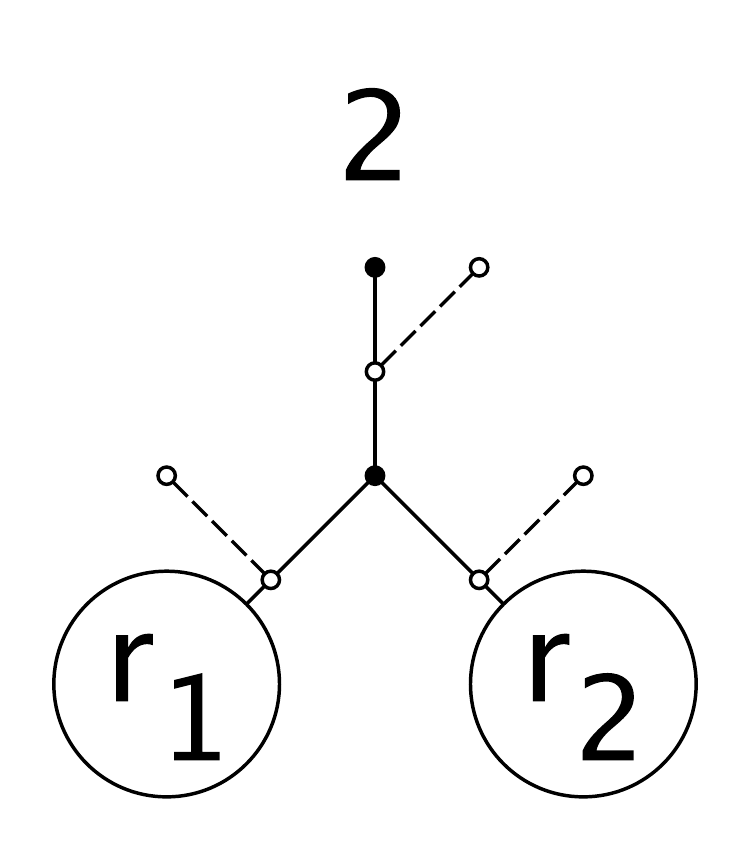}
  \captionof{figure}{Possible locations for $e_1$.}
  \label{NewLeaf}
\end{minipage}
\begin{minipage}{.49\textwidth}
  \centering
  \includegraphics[width=.5\linewidth]{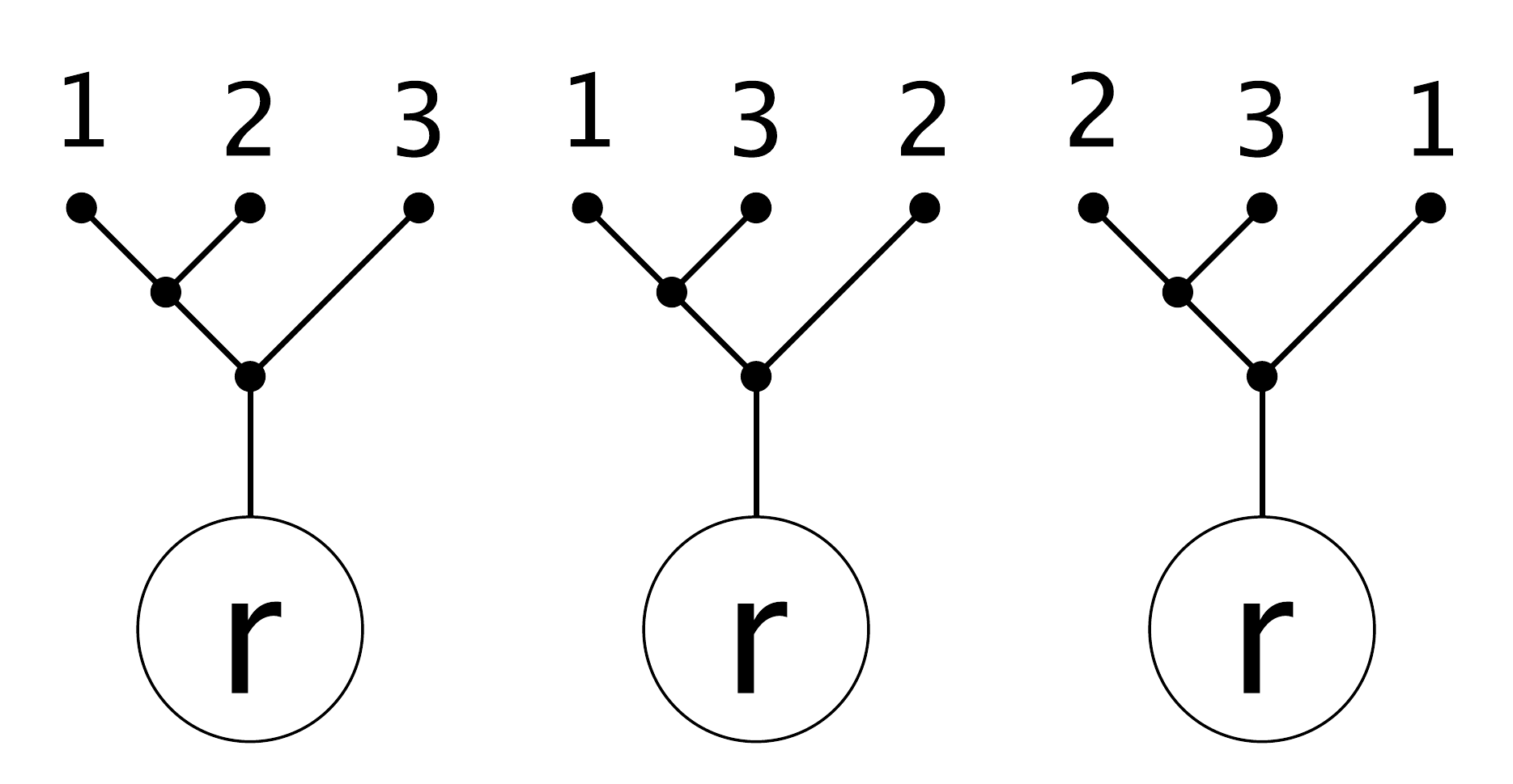}
  \captionof{figure}{Possible structures for 3-partners at $\{K_1, K_2, K_3\}$.}
  \label{Clusters}
\end{minipage}
\end{figure}

\begin{thm}
\label{(D(3)}
$ D(3) \leq 6. $
\end{thm}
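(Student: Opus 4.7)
The plan is to save one leaf in the general bound $D(3) \leq RD(3)+1 = 7$ by analyzing the unrooted situation directly rather than through the rooted-to-unrooted reduction. Given $S, T \in \mathcal{T}_{X,3}$ with $S \neq T$, the goal is to produce a disentangling set of size at most $6$.

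First, if there is a quartet $K \subseteq X$ with $S_{|K} \neq T_{|K}$ as multisets, then $K$ itself disentangles $S$ and $T$ and has size $4 \leq 6$, so we are done. The substantive case is therefore when $S_{|K} = T_{|K}$ for every $K$ with $|K|=4$. In this regime a partnering (in the sense of the definition above) exists at every quartet. Since a binary unrooted tree is determined by its induced quartets and $S \neq T$ as multisets of full trees, one can locate a quartet $K_1$ and a pair of partners $(U,V) = (S_{m_i}, T_i)$ at $K_1$ with $U \neq V$ as full trees; fix such $K_1$, $U$, and $V$.

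Because $U \neq V$ but $U_{|K_1} = V_{|K_1}$, there is a split $\sigma = A|B$ of $U$ that is not a split of $V$ and whose restriction to $K_1$ is trivial; after relabeling I may assume $A \cap K_1 = \emptyset$. I would then locate the edge $e_1$ of $V$ along which the cluster $A$ must ``hang,'' which produces the short enumeration of configurations corresponding to Figure \ref{NewLeaf}. A case check shows that one can select two leaves $\ell_1, \ell_2 \in A$ such that $U_{|K_1\cup\{\ell_1,\ell_2\}} \neq V_{|K_1\cup\{\ell_1,\ell_2\}}$; informally, $\{\ell_1,\ell_2\}$ forms a cherry opposite $K_1$ in $U$ but is placed differently in $V$ because the split $\sigma$ is absent there.

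The main obstacle is then to promote this separation of a single partner pair to a separation of the full multisets. When the three restrictions $S_{1|K_1}, S_{2|K_1}, S_{3|K_1}$ are pairwise distinct, the partnering is rigid and persists under any extension of $K_1$, so separating $(U,V)$ automatically disentangles $S$ and $T$. When two of the restrictions coincide, however, extending $K_1$ could reshuffle the partnering; this case is handled by the possible structures for $3$-partners shown in Figure \ref{Clusters}, using the cluster/placement analysis to pick $\ell_1, \ell_2$ so that no accidental re-pairing occurs, replacing them with an alternative pair in $A$ if needed. In every case the resulting disentangling set has size at most $|K_1| + 2 = 6$, establishing $D(3) \leq 6$.
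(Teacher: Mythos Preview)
Your approach differs substantially from the paper's and, as written, has genuine gaps at the two steps you defer to a ``case check'' or to the paper's figures.

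First, the assertion that some split $\sigma = A|B$ of $U$ not in $V$ satisfies $A \cap K_1 = \emptyset$ is unjustified and can fail. Take $K_1=\{1,2,3,4\}$ and let $U,V$ be the $6$-leaf trees with nontrivial splits $\{12|3456,\,125|346,\,1256|34\}$ and $\{12|3456,\,126|345,\,1256|34\}$ respectively. Then $U_{|K_1}=V_{|K_1}$, yet the only split of $U$ not in $V$ is $125|346$, whose restriction to $K_1$ is the nontrivial $12|34$; indeed neither side of this split contains four leaves, so \emph{no} quartet lies entirely on one side. Since you fix $K_1$ before locating $\sigma$, and the choice of $K_1$ is already constrained by the requirement that $U$ and $V$ be partnered there, you cannot sidestep this by re-choosing $K_1$. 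Without such a $\sigma$, the construction of $\ell_1,\ell_2 \in A$ never gets off the ground.

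Second, the promotion from separating one partner pair to disentangling the full multisets in the non-rigid case is not argued. You invoke Figures~\ref{NewLeaf} and~\ref{Clusters}, but those describe $2$- and $3$-partners with respect to the \emph{six}-element subsets $K_i=[7]\setminus\{i\}$ in the paper's contradiction setup, not partners at a quartet; the structural conclusions there (e.g.\ that $3$-partners differ only on a three-leaf cluster) rely on simultaneous agreement on several size-$6$ restrictions and do not transfer to agreement on a single quartet. The paper proceeds in the opposite direction: it assumes no $6$-element subset of $[7]$ disentangles, deduces that each tree of $S$ is $j$-partnered with the trees of $T$ for various $j$ summing to $7$, shows $j\ge 4$ forces equality, and then exhausts the two possible partner profiles (one $3$-partner and two $2$-partners, or two $3$-partners and one $1$-partner). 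That global pigeonhole over all seven $6$-leaf restrictions is doing work that your local ``add two leaves to a quartet'' construction does not replace.
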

 
\begin{proof}
 We will use contradiction. Suppose $ D(r) = 7$ and  let $ K_i = [7] \setminus i . $
 Then there must exist $ S , T \in \mathcal{T}_{ [7], 3 } $ such 
 that $ S_{ | K_i } = T_{ | K_i }$ for $1 \leq i \leq 7$. For everything that follows, fix some labelling of the 
 trees of $S$ and $T$ so that for each $i$, 
 every tree of $T$ and $S$ has a partner at $K_i$.

We will collect a few key observations about trees that 
are partnered together at multiple $K_i$.
If a tree of $S$ and a tree of $T$ are partnered together at $K_i$ for exactly $j$ distinct
values of $i$, then we will call them $j$-partners.
Suppose $ S_l  \not = T_m $ are $2$-partners and permute
the leaf labels so that they are partnered at $K_{1}$ and at $K_{2}$. Let $v_i$ be the leaf vertex labelled 
$i$, and let  $e_i$ be the edge adjacent to this vertex. Since $ S_{ l | K_{1} } = T_{ m  | K_{1} } $, 
there must be a unique edge on this tree where $e_1$ is attached to form $S_l$,
 and a different unique edge where $e_1$ is attached to form $T_m$. 
 But in order for our trees to still be equal when restricted to $K_{2}$, the two distinct
 edges where we attached $e_1$ must collapse to
the same edge when we remove $e_2$. Therefore, $S_l$ and $T_m$ must have the structure of the tree in Figure \ref{NewLeaf} where $r_1$ and $r_2$ are rooted trees and where $e_1$ is one of the dashed edges.

Now suppose that $S_l$ and $T_m$ are 3-partners, partnered at $K_1, K_2$ and $K_3$. 
From above, observe that
regardless of which edge is $e_1$, that the length of the path from $v_1$ to $v_2$ in both trees 
must be less than or equal to three. Therefore, the length of the path between each pair of vertices, 
$(v_1, v_3)$, $ (v_1 , v_2) $, and $(v_2 ,v_3)$, must be less than three in both $S_l$ and $T_m$.
 Consequently,  $S_l$ and $T_m$ must both have a cluster on $ \{ 1, 2, 3 \} $, 
 and they must be the same tree apart from these
clusters (i.e., $S_l$ and $ T_m $ must be two different trees from the list in Figure \ref{Clusters}, where $r$ is some rooted tree).
From the figure we also see if  $ S_{ l | K_{i} }  =  T_{ m | K_{i} } $ for $i \not \in \{ 1, 2 , 3 \}$, then $S_l = T_m$. This implies that for $j >3$ any two trees that are $j$-partners must be the same tree.

 If any tree of $S$ is equal to any tree of $T$, then
we can remove these trees to form the lists $S', T' \in \mathcal{T}_{ [7], 2 } $, and any set $K$ 
that disentangles $S'$ and $T'$ will disentangle $S$ and $T$. Since $D(2) = 6$ (\cite{MM})
this would imply $d ( S , T )  \leq 6$ contradicting our assumption that $d(S,T) = 7$. Therefore, we can assume that 
no two trees are $j$-partners for $j  > 3$. Since each tree of $S$ and each tree of $T$ must be
 partnered at all seven $K_i$, the only possibility for a single tree is that it has one 3-partner
 and two 2-partners or two 3-partners and one 1-partner. The particular partnering relationships impose restrictions on the possible structures of the trees in $S$ and $T$. We will now consider both cases and  
use these restrictions to arrive at a contradiction.

\bigskip

\noindent
\emph{Case 1:}  There exists a tree in $S$ or $T$ with one 3-partner and two 2-partners.

\begin{figure} 
\centering
\begin{minipage}{.49\textwidth}
  \centering
  \includegraphics[width=.45\linewidth]{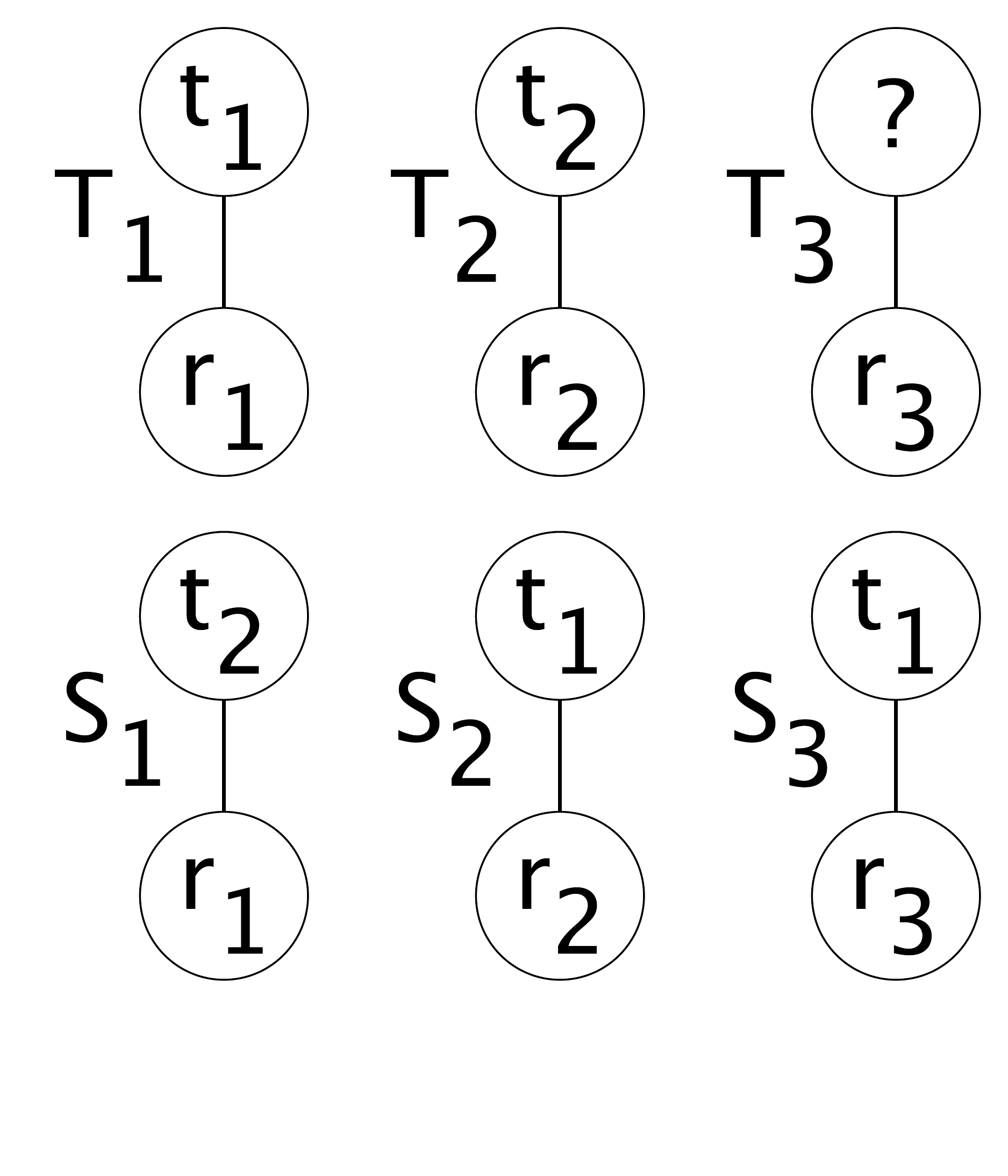}
   \captionof{figure}{Structure of trees satisfying Case 1.}
   \label{One3Partner_B}
\end{minipage}
\begin{minipage}{.49\textwidth}
  \centering
  \includegraphics[width=.45
   \linewidth]{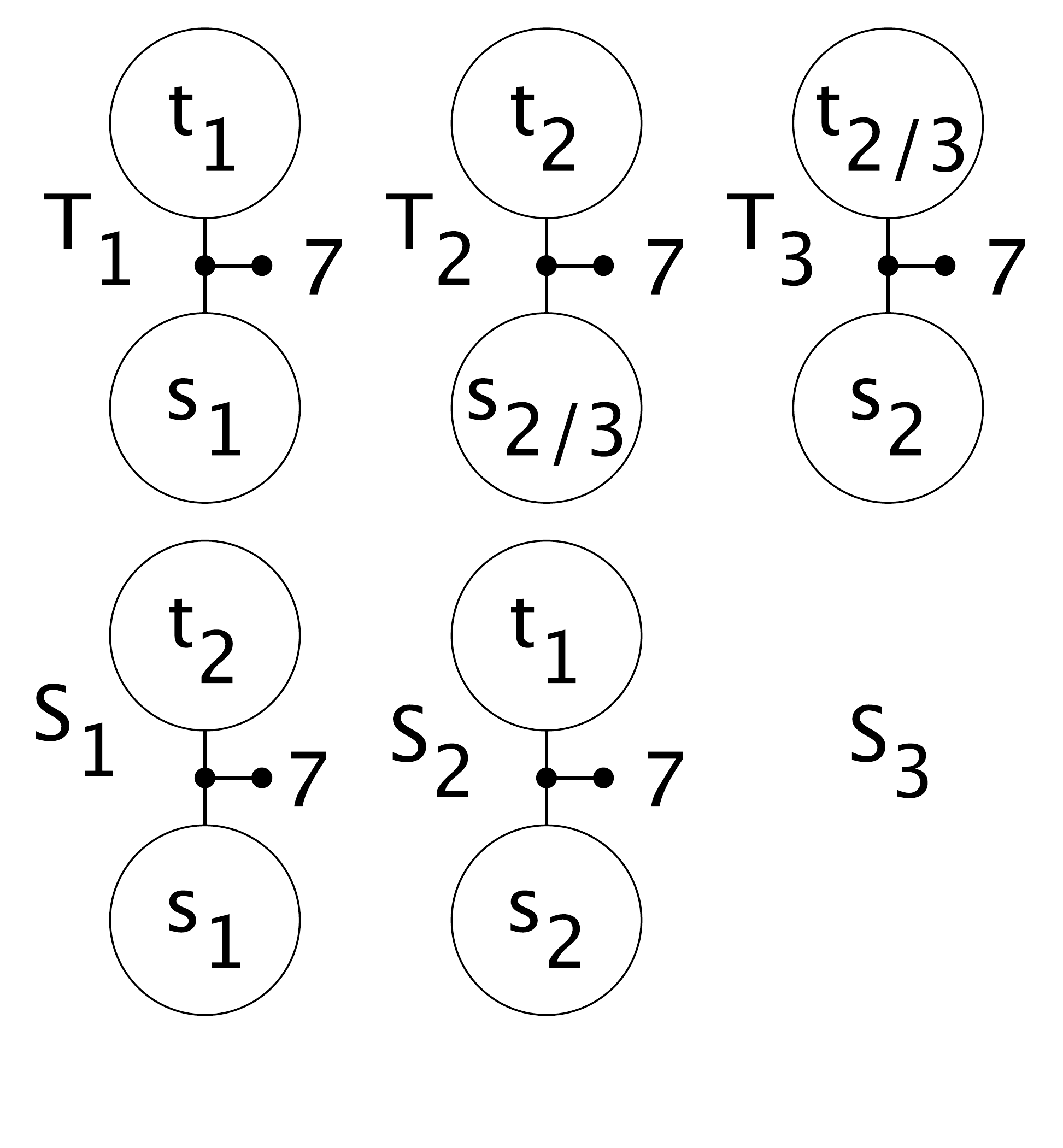}
   \captionof{figure}{Structure of trees satisfying Case 2.}
   \label{Two3Partners} 
 \end{minipage}
\end{figure}

\bigskip

\noindent
We will leave all partners fixed according to the original labelling. However, for convenience, we will relabel the lists, 
trees, and leaves so that
$T_1$ is partnered with $S_1$ at $ \{ K_1, K_2, K_3 \} $, with $ S_2 $ at $ \{ K_4, K_5 \} $, and with $ S_3 $ at $\{ K_6, K_7 \}$. 

So far then, we know $T_1$ and $S_1$ are 
as in Figure \ref{One3Partner_B}, where $t_1$ 
and $t_2$ are distinct clusters on $\{ 1, 2, 3 \} $
and $r_1$ is a rooted tree with leaf label set $ \{ 4, 5, 6, 7 \} $.
We also know that $T_{1 |K_4} = S_{2 |K_4} $, 
so $S_{2 |K_4} $
contains the cluster $t_1$. If $e_4$ is connected to an edge of $S_2$ somewhere in the cluster $t_1$ of $S_{2 |K_4} $, then it is impossible for $S_{2 | K_5} = T_1$. Therefore, we see that even without restricting to $K_4$,
$S_2$ and similarly $S_3$ must contain the cluster $t_1$. Thus, 
$T_1, S_1, S_2,$ and $S_3$ are all as depicted in Figure \ref{One3Partner_B}. 

Every tree must have a 3-partner, so let $T_2$ be a 3-partner of $S_2$. 
From our observations above, $S_2$ and $T_2$ differ only by a cluster on some three element set $K'$.
We know that $K'$ cannot contain $4$ or $5$ since $S_2$ is partnered with $T_1$ at $K_4$ and $K_5$.
Therefore, $K'$ contains at least one element of $ \{ 1, 2, 3 \}$. 
But to preserve the cluster $t_1$, it must be that $ K' = \{1, 2, 3 \} $. If $T_{2|K'} = t_3$, then $K'$ disentangles
$S$ and $T$, and if $T_{ 2 | K' } = t_1$ then $T_2 = S_2 $, it follows that  $T_{2 | K' } = t_2$.

Finally, $S_1$ and $T_1$ as well as $S_2$ and $T_2$ are 
partnered at $K_1, K_2,$ and $K_3$, which forces $S_3$ and $T_3$ 
to be partnered at $K_1, K_2$, and $K_3$.  As a result, $S_3$ and $T_3$ differ only by a cluster on $ \{1, 2 , 3 \} $ (Figure \ref{One3Partner_B}).
If $T_{ 3 | K' } = t_1$ then $T_3 = S_3$ and if $T_3 = t_2$ or $T_3 =t_3$ then $K'$ is a disentangling set. In any case we have a contradiction.

\bigskip

\noindent
\emph{Case 2:}  Every tree in $S$ and $T$  has two 3-partners and one 1-partner.

\bigskip

\noindent 
As before, leave the partnering relationships fixed and relabel the lists, trees, and leaves so that
$T_1$ is partnered with $S_1$ at $ \{ K_1, K_2, K_3 \} $, with $ S_2 $ at $ \{ K_4, K_5, K_6 \} $, and with $ S_3 $ at $\{ K_7 \}$. 
As we've seen, $T_1$ must have a cluster on $\{ 1, 2, 3\}$ and a cluster on $\{ 4, 5, 6 \}$, so $T_1$ is as pictured in Figure \ref{Two3Partners}. As 3-partners of $T_1$, both $S_1$ and $S_2$ must have clusters on  $\{ 1, 2, 3\}$ and $\{ 4, 5, 6 \}$ as well and are also as depicted in Figure \ref{Two3Partners}.

$S_1$ must have another 3-partner which we will label $T_2$. $S_1$ and $T_2$ must
differ only at a cluster on some three element set $K' \subset [7] \setminus \{1, 2 , 3\}$ which must contain elements of $\{4, 5 , 6\}$. Then as we argued above, to preserve the $\{4,5,6\} $
cluster on $S_1$ it must be that $K' = \{ 4, 5, 6 \} $. $S_2$ must also have a second 3-partner. This 
tree can't be $T_2$, since then both $T_1$ and $T_2$ would have two 3-partners, leaving 
$T_3$ as the sole 3-partner for $S_3$ and putting us back in Case 1. Therefore, $S_2$ and $T_3$ must be 3-partners, and the same logic 
shows that they differ only at a cluster on $\{ 1, 2, 3 \}$. The possible structures of $T_1, T_2, T_3, S_1, $ and $S_2$ are all displayed in Figure \ref{Two3Partners} ($t_{i/j}$ indicates that a cluster can be only either $t_i$ or $t_j$ and likewise for $s_{i/j}$).
Since the 3-partners of $S_1$ are $T_1$ and $T_2$, the 1-partner of $S_1$ must be $T_3$, and $S_1$ and $T_3$ must be partnered at $K_7$. From the diagram, it is clear $S_{1 | K_7}  \not = T_{ 3 | K_7}$,
which is a contradiction.
\end{proof}

%

 \section{The Fourier-Hadamard Coordinate Transformation}
 \label{Fourier}
 
For group-based models the Fourier-Hadamard coordinate transformation is a linear change of coordinates that makes each coordinate function of the parameterization a monomial. Importantly for our purposes, the linearity of the transformation means that it commutes with taking mixtures. In this description, both determining the dimension and finding phylogenetic invariants of the mixture varieties become much simpler. We will present a practical outline demonstrating how to recover the monomials; a thorough explanation of the transform can be found in \cite{Evans}, \cite{Sturmfels}, and \cite{Szekely} .  

Let $B$ and $B'$ be the partition of $[n]$ induced by removing an edge $e$ from a leaf-labelled tree $T$. The set of such splits of $T$ is denoted $\Sigma(T)$ and uniquely determines the tree topology. As a result, we will index the edges of $T$ by the splits that they induce. A phylogenetic model is group-based if there exists a group $G$ and functions $f_{B|B'} : G \rightarrow \rr $ associated to each edge of the $n$-leaf tree parameter $T$, such that when the character states are identified with the elements of $G$, the probability of character change along $e$ is dependent only on the difference between the character states at the endpoints of $e$. In other words, the probability of the endpoints of $e$ being in character states identified with group elements $g$ and $h$ is equal to $f_{B|B'}(g-h).$

Let $p_{g_1,\ldots, g_n}$ be the probability of observing the state $(g_1, \ldots ,  g_n)$ at the leaves of $T$ and let $q_{g_1, \ldots, g_n}$ be the image of this coordinate after the Fourier-Hadamard transformation. For the Jukes-Cantor model, we make the following identification of the character states with the elements of $\zz_2 \times \zz_2$: $A \mapsto (0,0), C\mapsto(1,0), G\mapsto(0,1), T\mapsto(1,1)$. Since the JC model assumes all transition probabilities are equal, our concern is only whether or not there are different 
characters at the endpoints of each edge. Thus,  $f_{B|B'}((1,0)) = f_{B|B'}((0,1)) = f_{B|B'}((1,1))$, and after transformation, the new coordinates of the parameter space are $a^{B|B'}_C=a^{B|B'}_G=a^{B|B'}_T \in (0,1]$ and $a^{B|B'}_A=1$. Then

 \begin{displaymath}
   q_{g_1, \ldots, g_n}  = \left\{
     \begin{array}{lr}
       \displaystyle \prod_{{B|B'} \in \Sigma(T)} a^{B|B'}_{\sum_{i \in B} g_i} & : \displaystyle \sum_{i = 1}^{n} g_i = 0 \\
       0 & : \text{otherwise}
     \end{array}
   \right.
\end{displaymath} 

Notice that in the group $\zz_2 \times \zz_2$, if the leaf elements sum to the identity then for every partition $B|B'$, $\displaystyle \sum_{i \in B} g_i = \displaystyle \sum_{i \in B'} g_i.$ Therefore, the monomial above does not depend on our labelling of the sets of the splits. We call the splits of $T$ associated to leaf edges the 
trivial splits and for simplicity we denote the parameters associated to the leaf edge labelled $i$ by $a^i_g.$  

Note that for each nontrivial coordinate and for each edge either $a^{B|B'}_A$ or $a^{B|B'}_C$ appears in the monomial parameterization of the coordinate. We encode the resulting monomials in tree diagrams as follows. Redraw the tree $T$, but make each edge solid if $a^{B|B'}_C$ appears and dotted if $a^{B|B'}_A$ appears. The solid edges of the diagram form a subforest of $T$ and the number of distinct nontrivial Fourier coordinates are in bijection with the subforests of $T$ \cite{SteelandFu}. 

\begin{ex}
\label{qcaggtg}

Let $T$ be the tree with nontrivial splits given by $\{15|2346\}$, $\{135|246\}$, and $\{1235|46\}$. The parameterization of a particular coordinate as well as the subforest induced by this coordinate are shown.

\begin{minipage}{.5\textwidth}
  \centering
  $$
q_{CAGGTG} = a^1_C a^2_A  a^3_C  a^4_C a^5_C  a^6_C a^{15|2346}_C a^{135|246}_A  a^{1235|46}_A $$
\end{minipage}
\begin{minipage}{.5\textwidth}
  \centering
  \includegraphics[width=.47\linewidth]{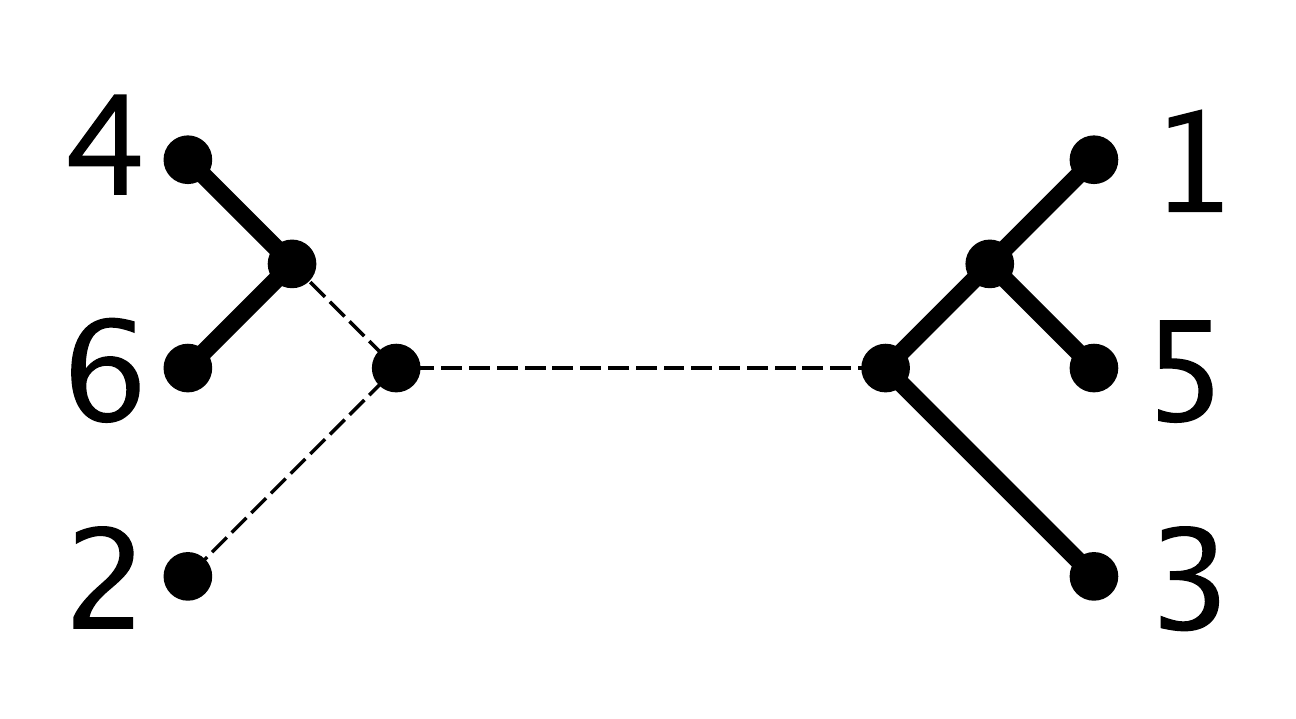}

\end{minipage}
\end{ex}

The inclusion of the $a^{B|B'}_A$ variables homogenizes the parameterization which will be convenient for some the work that follows. However, when writing out the explicit parameterization of a coordinate we will  dehomogenize by setting all of the $a^{B|B'}_A = 1$. This allows us to drop the subscripts entirely. For arbitrary $n$-leaf trees, we will further simplify notation by extending the numbering of the trivial splits to a numbering of all of the splits so that every split is labelled by some element of the set $[2n - 3]$. Thus, in Example \ref{qcaggtg} the coordinate can be written more compactly as $q_{CAGGTG} = a_{1}  a_{3}  a_{4} a_{5}  a_{6} 
a_{7}.$

%

\section{Dimension}
 \label{Dimension}
 
Our goal in this section is to prove dimension results for the
appropriate join varieties, which will allow us, for each
pair of triplets, to prove non-containment in only one
direction.  Our technique for deducing the dimension results
relies on the fact the the underlying varieties $V_{T}$ are
toric varieties as explained in the previous section.  We make use
of the tropical secant varieties approach of Draisma \cite{Draisma}
to prove our dimension results and refer the reader to that
paper for background.

When taking the join of $r$ varieties, we introduce 
$ ( r - 1 ) $ parameters, giving us the following bound
 $$
 \text{dim} (V_1 * \ldots * V_r) \leq \text{min} 
 \left \{ \displaystyle\sum_{ i = 1 }^k \text{dim}(V_i) + (r - 1) ,
 \text{dim} (V) \right \},
 $$
where $ V $ is the ambient affine space containing the  join variety. This bound is called the  \emph{expected dimension} and any join variety realizing this bound is called \emph{nondefective}. As an example of how we will use this concept, 
let $T=(T_1,T_2,T_3)$ be a  6-leaf triplet. 
We know that each tree in $T$ has 9 edges, and so from the Fourier parameterization of the Jukes-Cantor model, it is obvious that dim($V_{T_i}) \leq 9$. Therefore, if we can show that dim$(V_{T_1}*V_{T_2 }*V_{ T_3 }) = 29 $, we will have shown that this variety is nondefective. Now suppose we could show that the join variety of every 6-leaf triplet is nondefective, then as a corollary, the join varieties of any two 6-leaf triplets are the same dimension.  Since the varieties involved are irreducible, if they are of the same dimension, the containment
$$
 V_{T_1}*V_{T_2 }*V_{ T_3 } \subseteq  V_{S_1}*V_{S_2 }*V_{ S_3 }
 $$ 
implies that the two join varieties are actually equal. 
Thus, to separate $S$ and $T$, it is enough to show  
$V_{T_1}*V_{T_2 }*V_{T_3 }$ and 
 $ V_{S_1 }*V_{S_2}*V_{S_3} $ are not the same variety, meaning we would only need to find an invariant for one that is not an invariant for the other. 
 Since eventually our proof will necessitate separating certain triplet pairs, establishing nondefectiveness will greatly reduce the number of invariants we have to find. Moreover, it will make the task of separating pairs much easier in cases where finding an invariant to establish noncontainment in one direction is more difficult than doing so in the other.

The following definitions and terminology up to Theorem \ref{nondefective} are adapted from the more general presentation in \cite{Draisma}. Our approach is inspired by proof of the nondefectivess of the second secant variety associated to a 4-leaf tree as shown in (\cite{Allman}). Let $T \in \mathcal{T}_{[n],r}$, we will temporarily regard the $ a^{i}_A $ 
as variables which homogenizes the parameterization of $V_{T_i}$ for $1 \leq i \leq r$. We will consider these projective varieties as affine cones, $C_i$, which are closed under scalar multiplication. Now the additional parameters introduced when 
 constructing the join variety are superfluous, and we can take as our definition that the \emph{join} of $C_1$ through $C_r$  
 is

$$
 C_1 + \ldots + C_r := \{ c_1 + \ldots + c_r | c_i \in C_i, 1 \leq i \leq r\}. 
 $$

Each of these cones is the image of a 
polynomial map, $f_i : \cc^{m_i} \rightarrow  \cc^{q}$, and in our case, the use of Fourier coordinates ensure that each coordinate function is a monomial. Then for $1 \leq i \leq r$, we can write $f_i$ as a list $ (f_{ i , b })^{q}_{b = 1 } $ where each 
 $f_{i , b} = x^{\alpha}$. For example, if we let $r=3$ and $n=6$, then $m_1 = m_2 = m_3 = 18$,  $ x = (a^{1}_C, \ldots, a^{9}_C,  a^{1}_A, \ldots, a^{9}_A ) $, $\alpha \in \{ 0 , 1 \}^{18} $, and $q=4^6=4096$. 

Let  $ l_{ i , b } : \rr^{m_i} \rightarrow \rr $ be the piecewise linear
function defined by 
 $$ v \mapsto \langle v, \alpha \rangle .$$
 For  $ v = (v_1, \ldots, v_r )  \in \displaystyle \prod_{i=1}^r \rr^{m_i} $ and  $ b \in [q] $, we say that $\emph{i wins b at v} $  if

 \begin{enumerate}
 \item  $l_{i,b}(v_i) < l_{j, b}(v_j) $ for all $ j \not = i $ and
 \item $ l_{i,b} $ is differentiable (hence linear) near $v_i$. 
 \end{enumerate}
Denote the corresponding differential by $d_{v_i} l_{i,b}.$ 

\begin{defn}
For $ v = (v_1, \ldots, v_r )  \in \displaystyle \prod_{i=1}^r \rr^{m_i} $, let
$$ 
D_i(v) := \{ d_{v_i} l_{i,b}  | i \text{ wins } b \text{ at } v \} 
$$
be the set of \emph{winning directions of i at v}.
\end{defn}

\noindent Finally, we have all the requisite definitions to state the primary result we will need.

\begin{lemma}
\label{lemma: Draisma}
(\cite{Draisma})
The dimension of $ C_1 + \ldots + C_r $ is at least the maximum, taken over all $ v = (v_1, \ldots, v_r )  \in \displaystyle \prod_{i=1}^r \rr^{m_i} $, of the sum 
$$\displaystyle\sum_{ i = 1 }^r {\rm dim}_{\rr} \langle D_i(v) \rangle_\rr .$$
\end{lemma}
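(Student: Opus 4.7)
The plan is to realize $C_1 + \ldots + C_r$ as the closure of the image of the parameterization
\[
F : \prod_{i=1}^r \cc^{m_i} \longrightarrow \cc^q, \qquad (y_1, \ldots, y_r) \longmapsto \sum_{i=1}^r f_i(y_i),
\]
so that $\dim(C_1+\ldots+C_r)$ equals the rank of the Jacobian $D_y F$ at a generic point. To extract the tropical data, I would fix $v = (v_1,\ldots,v_r) \in \prod_i \rr^{m_i}$, choose a generic base point $y^{(0)} = (y_1^{(0)}, \ldots, y_r^{(0)})$, and introduce the twisted family $G(y,t) := F(y_1 \odot t^{v_1}, \ldots, y_r \odot t^{v_r})$, where $\odot$ denotes coordinatewise multiplication and $t^{v_i}$ the coordinatewise power. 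For each fixed $t > 0$, the image of $G(\cdot,t)$ still lies in $C_1 + \ldots + C_r$, so the generic rank of $D_y G(\cdot,t)$ is a lower bound for the quantity we want.

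Writing $f_{i,b}(x_i) = x_i^{\alpha_{i,b}}$, which is licit because the Fourier parameterization of each $C_i$ is toric, a direct computation gives
\[
\frac{\partial G_b}{\partial y_{i,j}} \;=\; \alpha_{i,b,j}\, y_i^{\alpha_{i,b} - e_j}\, t^{\,l_{i,b}(v_i)}.
\]
For each output coordinate $b$ I would rescale the $b$-th row by $t^{-\mu_b}$, with $\mu_b := \min_i l_{i,b}(v_i)$. If some $i$ wins $b$ in the sense of the lemma, then every non-winning entry in row $b$ picks up a strictly positive power of $t$ and vanishes as $t \to 0$, while the winning entries survive unchanged. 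If no $i$ wins $b$ (ties or non-linearity), I simply discard the row; passing to a submatrix only lowers the rank, which is harmless for a lower bound.

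The limit matrix is then block-structured: the surviving rows with winner $i$ are supported exactly on the columns indexed by the $i$-th factor, and, after invertible rescalings by the nonzero monomials $y_i^{(0),\alpha_{i,b}}$ and the nonzero coordinates $y_{i,j}^{(0)}$, their entries reduce to $\alpha_{i,b,j}$ — that is, row $b$ becomes $\alpha_{i,b} = d_{v_i} l_{i,b}$. So the $i$-th diagonal block has rank exactly $\dim_\rr \langle D_i(v) \rangle_\rr$, and because the blocks occupy disjoint column ranges the total rank of the limit is $\sum_i \dim_\rr \langle D_i(v) \rangle_\rr$. By lower semicontinuity of matrix rank (the locus where a chosen maximal minor is nonzero is open), $D_y G(\cdot,t)$ achieves at least this rank for all sufficiently small $t > 0$, and taking the maximum over $v$ yields the claim.

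The main obstacle will be the asymptotic bookkeeping: showing that the generic $y^{(0)}$ makes the relevant minor of the limit block matrix nonzero — so that semicontinuity really transfers the bound back to $D_y F$ — and confirming that discarding the tied/degenerate rows does not accidentally destroy the identification of each block's row span with $\langle D_i(v)\rangle_\rr$. Everything else is a straightforward valuation computation on monomials.
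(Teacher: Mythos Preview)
The paper does not supply its own proof of this lemma; it is stated with a citation to Draisma and used as a black box. So there is nothing in the paper to compare against directly.

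That said, your argument is the standard one underlying Draisma's tropical lower bound, and it is correct in outline. A few minor remarks: in the present paper's setup each $f_{i,b}$ is a \emph{single} monomial (the $C_i$ are toric in Fourier coordinates), so every $l_{i,b}$ is genuinely linear rather than piecewise linear, and the differentiability clause in the definition of ``$i$ wins $b$'' is automatic here. Your handling of the tied rows (discarding them) is fine for a lower bound, and your identification of the $i$-th block's row span with $\langle D_i(v)\rangle_\rr$ after the diagonal rescalings is exactly right, since for generic $y^{(0)}$ all the monomials $y_i^{(0),\alpha_{i,b}}$ and coordinates $y_{i,j}^{(0)}$ are nonzero. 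The semicontinuity step is also standard: a maximal nonvanishing minor of the limit matrix is a polynomial in $t$ (after clearing the row rescalings), so it is nonzero for small $t>0$, and then the generic rank of $D_yF$ dominates it.
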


\noindent This gives us a way to compute lower bounds on the dimensions of join varieties. Now to show that a join variety is non-defective, we just need to show that this lower bound is equal to the expected dimension. 

\begin{thm}
\label{nondefective}
Let $T \in \mathcal{T}_{[n],r}$. For $n \geq 4$ and $r \leq  \lceil \frac{n}{2} \rceil$, the join variety $V_{T_1}*\ldots*V_{T_r} $ associated to the $r$-class Jukes-Cantor mixture model is nondefective.
\end{thm}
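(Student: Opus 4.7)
The plan is to apply Lemma~\ref{lemma: Draisma}. Since each cone $C_i$ has dimension $2n-2$ (equal to the rank of the linear span of the monomial exponents $\alpha_{i,b}$ arising from the Fourier parameterization), the expected-dimension bound gives $\dim(C_1 + \cdots + C_r) \leq r(2n-2)$, and non-defectiveness amounts to exhibiting $v = (v_1, \ldots, v_r) \in \prod_i \rr^{m_i}$ for which every $\langle D_i(v)\rangle_\rr$ coincides with the full $(2n-2)$-dimensional span of the $\alpha_{i,b}$'s. For a Fourier coordinate $q_g$ (with $g \in (\zz_2 \times \zz_2)^n$, $\sum_k g_k = 0$), the edge-set $F_i(g) \subseteq E(T_i)$ of non-$A$ labels completely determines $\alpha_{i,g}$: the difference $\alpha_{i,g} - \alpha_\emptyset$ depends only on $F_i(g)$, and these differences span the $(2n-3)$-dimensional "edge directions" subspace precisely when the indicators $\chi_{F_i(g)}$ span $\rr^{E(T_i)}$. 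A useful class of such witnesses are the pair-path coordinates: taking $g_{l_1} = g_{l_2} = C$ and all other entries $A$ gives $F_i(g) = P_{T_i}(l_1,l_2)$, and the pair-path indicators $\chi_{P_{T_i}(l_1,l_2)}$ span $\rr^{E(T_i)}$ for every binary tree $T_i$ by duality with the injective tree-metric map.

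The construction of $v$ uses the hypothesis $r \leq \lceil n/2 \rceil$ to assign to each tree $T_i$ a leaf set $L_i \subseteq [n]$ with $|L_i| \geq 2$, the $L_i$'s partitioning (or nearly partitioning) $[n]$. Writing $w_i(e) := v_i(C,e) - v_i(A,e)$, one tunes $w_i$ to be large and negative on carefully chosen edges of $T_i$ so that coordinates $q_g$ with $g$ supported on $L_i$ are strictly cheaper under tree $T_i$ than under any $T_j$ with $j \neq i$. Small generic perturbations provide tie-breaking and arrange for $\alpha_\emptyset$ to appear in one of the winning spans. Once this is achieved, the winning set $B_i$ contains enough $g$'s that $\{\chi_{F_i(g)} : g \in B_i\}$ spans $\rr^{E(T_i)}$, whence $\dim_\rr \langle D_i(v)\rangle_\rr = 2n-2$ for every $i$ and $\sum_i \dim_\rr \langle D_i(v)\rangle_\rr = r(2n-2)$ as required.

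The main obstacle is the combinatorial verification: a single pair of leaves in $L_i$ contributes only one pair-path indicator, so when $|L_i| = 2$ the $L_i$-internal pair-paths alone cannot span $\rr^{E(T_i)}$. One must therefore exploit additional winning coordinates — $g$'s of higher support on $L_i$, or appropriately chosen cross-group pair-paths — and show that the $w_i$'s can be chosen consistently across all $r$ trees for arbitrary tree topologies. The bound $r \leq \lceil n/2\rceil$ is precisely what provides each tree with enough "private" leaf structure for this redistribution to succeed; proving that the winning condition holds uniformly for every multiset $(T_1,\ldots,T_r)$, particularly in the boundary case $r = \lceil n/2\rceil$, is the technical heart of the proof and the step I expect to be the most delicate.
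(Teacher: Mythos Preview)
Your framework is the same as the paper's --- apply Lemma~\ref{lemma: Draisma} and exhibit $v$ with $\dim_\rr\langle D_i(v)\rangle_\rr = 2n-2$ for every $i$ --- but the construction of $v$ you propose is not the one that works, and the obstacle you flag at the end is real and unresolved.

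Your leaf-partition scheme assigns each tree a private set $L_i$ of size at least $2$, and you hope to win enough coordinates supported on $L_i$ to span $\rr^{E(T_i)}$. But when $|L_i|=2$ (which is the generic situation at the boundary $r=\lceil n/2\rceil$), the only Fourier coordinates $g$ with $\sum g_k=0$ and support in $L_i$ are $g=0$ and the three $g$'s with $g_{l_1}=g_{l_2}\in\{C,G,T\}$, all of which induce the \emph{same} subforest (the single path $P_{T_i}(l_1,l_2)$). That gives two directions, not $2n-2$. Any attempt to borrow cross-group coordinates immediately entangles the winning conditions across trees with different topologies, and you give no mechanism for resolving that uniformly over all multisets $(T_1,\ldots,T_r)$.

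The paper sidesteps this entirely by a different, tree-independent construction. Set the entries of $v_i$ to depend only on whether the corresponding edge is a leaf edge or internal, so that $l_{i,b}(v_i)$ becomes an affine function of $t_b$, the number of non-$A$ entries in the index of coordinate $b$. Since $t_b$ does not depend on the tree, the winning condition is decided purely by which affine function $\mu_i(t)=(\alpha_i-\beta_i)t+n\beta_i$ is minimal at $t=t_b$. One then chooses the $(\alpha_i,\beta_i)$ so that tree $i$ wins every $b$ with $t_b$ in a designated two-element window (roughly $\{2i-1,2i\}$, with small adjustments when $r$ is odd). The hypothesis $r\le\lceil n/2\rceil$ is exactly what guarantees enough room in the range $\{0,2,3,\ldots,n\}$ for $r$ such windows. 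Spanning is then verified not by pair-paths alone but by invoking the Tree-Metric Theorem (for the window containing $t_b=2$) and a direct edge-isolation argument using subforests whose leaf counts differ by one (for the other windows). This is the missing idea: partition the values of $t_b$, not the leaves.
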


\begin{proof} 
Let $T \in \mathcal{T}_{[n],r}$, by Lemma \ref{lemma: Draisma}, to show nondefectiveness it will be enough to find a vector $v = (v_1, \ldots , v_r) $ so that
for $1 \leq i \leq r$, ${\rm dim}_{\rr} \langle D_i(v) \rangle_\rr = (2n-3) + 1.$ Thus,  $C_1 + \ldots + C_r $ will have dimension $r(2n-3) + r$, and when we projectivize by setting the $ a^{j}_A = 1 $, we will have  $ {\rm dim} (V_{T_1}*\ldots*V_{ T_r }) = r(2n-3) + (r-1) $ as desired. 

The set of winning directions of $i$ at $v$, $D_i(v)$, is a set of 0-1 vectors in $\rr^{4n-6} $. Our goal will be to construct the vector $v$ in such a way that the vectors in each $D_i(v)$ span a space of dimension $2n - 2$. Recall that for a tree $T$, the distinct Fourier coordinates are in 
bijection with the subforests of $T$. Therefore, each $b$ induces a subforest on the trees $T_1, \ldots, T_r$ the number of leaf edges of which, $t_b$,  depends only on the number of entries that are not $A$ in the index of the $b$-th coordinate. For example, if the $b$-th coordinate is $q_{AACGT}$ then $t_b = 3$ since in \emph{any} 5-leaf tree the subforest induced by $b$ contains 3 leaf edges.

\bigskip
\noindent \emph{Case 1:} $r$ is even
\bigskip 

\noindent Construct the vector $v = (v_1, \ldots, v_r)$ as follows.
Each $v_i$ has $4n-6$ entries corresponding to leaf edges, half  of which correspond to the variables $a^j_C$ and half to the homogenizing variables $a^j_A$. Let the entries of $v_i$ corresponding to leaf edges $a^j_C$ be equal to $\alpha_i$, to leaf edges $a^j_A$ equal to $\beta_i$, and set all other entries equal to zero.

Then for $1 \leq j \leq r$, $1 \leq b \leq q$,
$$
l_{i,b}(v_i)  = \alpha_i t_b + (n-t_b)\beta_i = (\alpha_i - \beta_i)t_b + n\beta_i.
$$
Notice that this function depends only on the number of leaf edges in the subforest of $T_i$ induced by $b$. Let 
$ \mu_i : \rr \rightarrow \rr$ be given by 
$$
t \mapsto   (\alpha_i - \beta_i)t + n\beta_i.
$$ 
The two parameters  $\alpha_i$ and $\beta_i$ allow us to make $\mu_i$ whatever line we wish in $\rr^2$. Now we have $l_{i,b}(v_i) = \mu_i(t_b)$, and as explained, this value is completely independent of the trees under consideration. Thus, if $\mu_i(t) < \mu_j(t)$ for all $j \not = i$, then for any $b$ with $t_b =t$, $i$ wins $b$ at $v$. Choose $\alpha_i$ and $\beta_i$  so that  $ {\rm min}_j (\mu_j(t))$ is a continuous piecewise linear function,
$ {\rm min}_j (\mu_j(t)) = \mu_1(t)  \text{ if }t \in [0, \frac{5}{2}]$, and for $1< i \leq r$, $ {\rm min}_j (\mu_j(t))= \mu_i(t)  \text{ if } t \in [2i - \frac{3}{2}, 2i + \frac{1}{2}] $. Then 1 wins $b$ at $v$ if $t_b=0,2$, and for $1 < i \leq r$, $i$ wins $b$ at $v$ if $t_b = 2i$
 or  $2i -1$. 
 
 Form the matrices $M_i(v)$ with columns equal to the vectors in $D_i(v)$.  Now we just need to show that for $1 \leq i \leq r $, rank($M_i(v)) = (2n - 3) + 1 $. In order to do so, we will reinterpret our matrices in order to utilize previous results about reconstructing trees from subtree weights. 
 Let $T'$ be a tree, and assign to each edge a positive 
weight $w(e)$. Define the weight of a subforest to be the sum of the weights of the edges contained in the subforest. Let $M'_i(v)$ be the matrix consisting of the first $2n-3$ rows of $M_i(v)$ and





 $$
 w =
 \begin{pmatrix}
w(e_1) \\ 
\vdots \\ 
w(e_{2n-3}) \\
\end{pmatrix},
$$
then $ M'_i(v)w $ is a column vector with $j$-th entry equal to the weight of the subforest corresponding to the $j$-th column of 
$M'_i(v)$. 
 
$M'_1(v)$ contains column vectors corresponding to the empty 
subforest as well as the subforests with two leaf edges. A subforest with exactly two leaf edges with degree one vertices $u$ and $v$  is just the path between $u$ and $v$ with weight $d(u,v)$. Therefore, for fixed $w$, the entries of the column vector $M'_1(v)w$ determine a tree metric $\delta$ which has a graph realization $T'$.  By the \emph{Tree-Metric theorem} (\cite{Pachter, SempleandSteel}) $T'$ is the unique tree metric representation of $\delta$, and $w$ is the unique solution to 
$$
M'_1(v)x = M'_1(v)w.
$$
Therefore, we know rank($M'_1(v)) = 2n - 3 $. In fact, we have that the column rank of just the columns corresponding to subforests with two leaf edges is equal to $2n - 3$. If we let  $(x_1, \ldots, x_{2n-3}, y_1, \ldots, y_{2n-3})$ be an arbitrary vector in  $ \rr^{4n-6}$, then each of the columns is contained in the subspace defined by
$$
 x_1 + \ldots + x_{n} = \frac{2}{n-2}( y_1 + \ldots + y_{n}) .
 $$
The column corresponding to the empty subforest is clearly not contained in this subspace, so its addition increases the column rank by one, which implies  
rank($M_1(v)) = 2n - 2$. 

For $i \geq 2$, in order to show that rank($M_i(v)) = 2n-2$, we will first show that we can recover every edge weight if we know the weight of the subforests on $2i$ and $2i -1$ leaves. To determine the weight of a leaf edge $uv$ with degree three vertex $u$, choose a subforest with $2i$ leaves that includes all three edges incident to $u$. Choosing such a subforest is always possible since $2i \geq 3$. Removing $uv$ results in a subforest with $2i-1$ leaves with corresponding vector also in $D_i(v)$.The difference of the weights of these two subforests determines the weight of the leaf edge. 
 
For an internal edge $uv$, we construct a subforest that includes the edge $uv$ and the other four edges incident to either $u$ or $v$. The fact that $2i \geq 4$ ensures that such a subforest exists. Again, omitting $uv$ from this subforest gives us a different subforest on $2i$ leaves, and subtracting, we obtain the weight of $uv$.

For each edge, we found two subforests that differed by exactly that edge. Subtracting these vectors we obtain every column of the matrix 
$
\begin{pmatrix}
I \\
-I \\
\end{pmatrix},
$ where $I$ is the $(2n-3) \times (2n-3)$ identity matrix.
Anything in the column span of these vectors possesses the property that the entry for $a_C^k$ is just the negative of the entry for $a_A^k$. Therefore, adding any vector without this property to the set, i.e., adding any of the other subforest vectors, increases the rank by one. Thus, rank($M_i(v))=2n - 2.$


\bigskip
\noindent \emph{Case 2:} $r$ is odd
\bigskip 

\noindent We construct the vector $v = (v_1, \ldots, v_r)$ as in the first case so that $ {\rm min}_j (\mu_j(t))$ is a continuous piecewise linear function with
$$
 {\rm min}_j (\mu_j(t)) =
\begin{cases}
\mu_1(t) & : t \in [0, \frac{5}{2}] \\
\mu_2(t) & : t \in [\frac{5}{2}, 4] \\
\mu_3(t) & : t \in [4, \frac{11}{2}] \\
\mu_i(t) &  : t \in [2i - \frac{5}{2}, 2i - \frac{1}{2}]. \\
\end{cases}
$$

 Notice that no $i$ wins $v$ at $b$ if $t_b=4$, but that if $t_b \not = 4$, then 
\begin{itemize}
\item 1 wins $b$ at $v$ if $t_b = 0$ or $2$
\item 2 wins $b$ at $v$ if $t_b = 3$
\item 3 wins $b$ at $v$ if $t_b = 5$
\item for $ 3< i \leq r$, $i$ wins $b$ at $ v$ if $t_b = 2i - 2$ or $t_b = 2i -1.$ 
\end{itemize}

Let $s^3_b$ be the number of internal edges of the subforest induced by $b$ on $T_3$.  We will perturb the entries of $v_3$ so that none of the above winning directions is affected but so that if $t_b =4$, then $3$ wins $b$ at $v$ if $s_b^3 = 0$ and $4$ wins $b$ at $v$ if $s_b^3 > 0$.

Set the entries of $v_3$  corresponding to internal edges $a^j_C$ equal to $\gamma$ and to internal edges $a^j_A$ equal to $\delta$.  Let 
$\tilde \mu_3 (s,t) =  \mu_3(t) + \gamma s + (n - 3 - s)\delta$. 
Now $l_{3,b}(v_3) = \tilde\mu_3(s^3_b, t_b)$ and for $i \not = 3$, $l_{i,b}(v_i) = \mu_i(t_b)$.
 Let $\delta = \epsilon > 0$ and
  $\gamma = - \epsilon(n - 3.9) $.
Now if $t_b = 4$ and $s_b = 0$, then 
   $l_{3,b}(v_3) = \tilde \mu_3(0,4) =  \mu_3(4) + \epsilon(n - 3) = \mu_2(4) +  \epsilon(n - 3) = l_{2,b}(v_2) + \epsilon(n - 3) > l_{2,b}(v_2)$, and so 2 wins $b$ at $v$. 
   
By our choice of $\gamma$ and $\delta$, as $s$ increases $\tilde \mu_3(s,t)$ decreases. Therefore, to show that 3 wins $b$ at $v$ when  $t_b = 4$ and $s_b > 0$, it is enough to show that 3 wins $b$ at $v$ when $t_b = 4 $ and $s_b = 1.$ In that case, we have
   $l_{3,b}(v_3) =\tilde \mu_3(1,4) = \mu_3(4) + \gamma + \epsilon(n - 4) < \mu_3(4) = \mu_2(4) = l_{2,b}(v_2)$, which implies 3 wins $b$ at $v$. Now choose $\epsilon > 0 $ small enough to ensure that the winning directions for  $t_b \not = 4$ are unaffected.

\noindent

Now we would like to show that for $1 \leq i \leq r$, ${\rm dim}_{\rr} \langle D_i(v) \rangle_\rr = 2n - 2.$ When $i = 1$, the proof is the same as in the even case. Likewise, for $i > 3$, $D_i(v)$ contains vectors corresponding to subforests with $2i - 2$ and $2i -1$ leaves. Essentially the same arguments from the even case where  $D_i(v)$ contained vectors corresponding to subforests with $2i - 1$ and $2i $  leaves show that ${\rm dim}_{\rr} \langle D_i(v) \rangle_\rr = 2n - 2.$ Therefore, we just need to establish the rank of $M_2(v)$ and $M_3(v)$.

$M'_2(v)$ contains vectors corresponding to every 3-leaf subtree of $T_2$. Just as before, given the weight vector $w$, $M'_2(v)w$ encodes the weight of every 3-leaf subtree. The main theorem in \cite{Pachter} implies that these weights uniquely determine $w$, so rank($M'_2(v)) = 2n - 3$. Every one of these vectors is contained in the hyperplane defined by 

$$
 x_1 + \ldots + x_{n} = \frac{3}{n - 3}( y_1 + \ldots + y_{n}).
 $$
Since any binary tree with four or more leaves contains at least two cherries, $T_3$ has a 4-leaf subforest with no internal edges, and $2$ wins $v$ at the coordinate corresponding to this subforest. The vector corresponding to this subforest is not contained in the hyperplane above, so
rank($M_2(v)) = 2n-2$. 

Just as before, for each edge $e$ of $T_3$ we use two subforests from the set of 4 and 5-leaf subforests that differ only by $e$ to determine $w(e)$. To carryout this procedure, we never require a subforest with no internal edges. This would only be the case if when isolating an internal edge $e$ with endpoints $u$ and $v$, every other edge adjacent to $u$ and $v$ was a leaf edge. However, that would imply that $n = 4$, which is contradiction. Again, the columns of the matrix 
$
\begin{pmatrix}
I \\
-I \\
\end{pmatrix}
$ are in the column span and adding any of the original subforest vectors gives us
rank($M_3(v)) = 2n -2$.
\end{proof}

%

\section{Phylogenetic Invariants}
\label{Phylogenetic Invariants}

As outlined, the fact that $ D(3) = 6 $ means that to show the generic identifiability of the tree 
 parameters for the 3-tree Jukes-Cantor mixture model, we only need to separate all mixtures with $n=6$ leaves. While this is at least finite, there are still
 $19,698,048,370$ pairs of 6-leaf triplets, making it infeasible to try
 to list all pairs and separate them directly.
For some pairs $S, T \in \mathcal{T}_{ [6], 3}$, there is a 
disentangling set $K$ such that $|K| < 6 $, so our strategy is to first generate all pairs of 5-leaf triplets and wherever possible show the 
 mutual non-containment of their corresponding varieties. 
 By application of Theorem \ref{nondefective}, we actually only need to 
 show the two varieties are not identical, and to do so we will need to find a phylogenetic invariant that holds for one mixture and not the other. 
 Once complete, we will have a short list of 5-leaf triplet pairs 
 for which we are unable to show that their varieties are not identical. 
 From this list, we arrive at a much smaller list of pairs of 
 $6$-leaf triplets which we need to separate using invariants.
 For these $6$-leaf triplets, we use linear invariants and
 higher order invariants to separate all of the pairs.

This final step in the proof of Theorem \ref{MainThm} is
highly computational.  The steps are all completely contained
in the three Maple (\cite{Maple}) worksheets  
\begin{center}
{\tt LinearInvariants\_5Leaf.mw}, {\tt LinearInvariants\_6Leaf.mw}, 

{\tt Higher\_Degree\_Invariants.mw}
\end{center}  
which are located at the website:
\begin{center}
{\tt http://www4.ncsu.edu/~smsulli2/Pubs/ThreeTreesWebsite/threetrees.html}
\end{center}

We outline the methods for finding separating
linear invariants and higher-order invariants in the next two subsections.

\subsection{Linear Invariants}
\label{Linear Invariant}

Our first step will be to distinguish 3-tree mixture models with 
5 leaves from one another by finding linear invariants that hold for one mixture but not the other. A few
observations will help us reduce the dimension of the ambient space of the varieties. For our purposes, it is unnecessary to calculate invariants that hold for all mixtures. For the Jukes-Cantor model, because $\{ C, G, T \}$ are in the 
same class, every model will have linear invariants arising from 
permutations on this set. For example, any five leaf mixture 
will have the same parameterization on the coordinates in the set 
$$
\{ q_{CCCGT}, q_{GGGCT}, q_{TTTGC}, q_{CCCTG}, 
q_{GGGTC}, q_{TTTCG} \} .
 $$
 Therefore, the difference of any two of these elements is an invariant for every 3-tree mixture model with 5 leaves.
We will only consider the lexicographically first element as a 
representative of each such set. By doing this, and removing
coordinates that are always zero, we can perform our
calculations in $ \cc^{51} $ instead of $\cc ^ {1024} $. (In the provided maple file, we also exclude the coordinate $q_{AAAAA},$ which is not involved in any linear invariant).

Applying an element of $S_5$ to the leaf labels of all of the 
trees in a mixture model merely permutes the coordinates of 
our parameterization. As a result, once we have determined that the models corresponding to a 5-leaf triplet pair do not in fact 
have the same variety, then by applying the elements of $S_5$
to all six trees in the pair, we generate new 5-leaf triplet pairs
that do not have the same variety. To illustrate, although there
are 680 different identical 5-leaf triplet pairs ($(T,T) \in \mathcal{T}_{[n],r}\times \mathcal{T}_{[n],r}$), all of 
these pairs can be
generated by applying an element of $S_5$ to one of only 28
such pairs. 

The following lemma allows us to compute the linear invariant space of every 3-tree mixture model using the built-in Maple function {\tt{IntersectionBasis}}.

\begin{lemma} 
A linear polynomial $ f $ is an invariant of $V_{T_1} * V_{T_2} * V_{T_3} $ $ \iff $ $ f $ is an invariant of $V_{T_1}, V_{T_2},$ and $ V_{T_3} $.
\end{lemma}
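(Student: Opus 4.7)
The plan is to combine the description of the join as a sumset of affine cones with the linearity of $f$. In the homogenized Fourier-Hadamard coordinates from Section \ref{Fourier}, each coordinate of the parameterization $\psi_{T_i}$ is a monomial of common degree $2n-3$, so each $V_{T_i}$ is closed under scalar multiplication, i.e.\ is an affine cone containing the origin. Consequently the definition of the join given at the start of Section \ref{Dimension} reads simply
$$V_{T_1} * V_{T_2} * V_{T_3} \;=\; \overline{\{\,c_1+c_2+c_3 \,:\, c_i \in V_{T_i}\,\}}.$$

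One direction is essentially immediate. If a linear form $f$ vanishes on each $V_{T_i}$, then for every choice of $c_i \in V_{T_i}$ linearity gives $f(c_1+c_2+c_3)=f(c_1)+f(c_2)+f(c_3)=0$, so $f$ vanishes on the sumset and hence on its Zariski closure $V_{T_1}*V_{T_2}*V_{T_3}$. For the converse, suppose $f$ is a linear polynomial vanishing on the join. Since $0+0+0=0$ lies in the sumset, $f(0)=0$ forces the constant term of $f$ to be zero, so $f$ is a genuine linear form. Now fix $i$ and any $c_i \in V_{T_i}$, and use the cone property to choose $c_j = 0 \in V_{T_j}$ for each $j \neq i$. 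Then $c_i$ belongs to the sumset, and linearity gives $f(c_i)=f(c_1+c_2+c_3)=0$; since $c_i$ was arbitrary, $f \in I(V_{T_i})$.

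The argument is genuinely short, and I do not foresee any real obstacle: the only nontrivial input is the fact that each $V_{T_i}$ is an affine cone containing the origin, which is already supplied by the homogenized Fourier parameterization used throughout Section \ref{Dimension}. The only thing one should double-check is that the statement about ``linear polynomials'' is to be read modulo the observation that the constant term is automatically forced to be zero, which is handled by evaluating at the origin as above.
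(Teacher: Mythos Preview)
Your proof is correct. The paper in fact states this lemma without proof, treating it as immediate, so there is no argument to compare against directly; your route through the homogenized cones of Section~\ref{Dimension} is entirely valid.

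One minor remark: the detour through homogenization is not strictly necessary. Working directly with the mixture parameterization $\pi_1 v_1 + \pi_2 v_2 + \pi_3 v_3$ with $\pi \in \Delta^2$, the containment $V_{T_i} \subseteq V_{T_1}*V_{T_2}*V_{T_3}$ follows at once by taking $\pi_i = 1$ and $\pi_j = 0$ for $j \neq i$, which gives the converse direction without needing $0 \in V_{T_j}$. For the forward direction, write a linear polynomial as $f = \ell + c$ with $\ell$ a linear form; then $f$ vanishing on each $V_{T_i}$ forces $\ell(v_i) = -c$ for all $v_i \in V_{T_i}$, whence
\[
f\Bigl(\sum_i \pi_i v_i\Bigr) \;=\; \sum_i \pi_i\,\ell(v_i) + c \;=\; -c\sum_i \pi_i + c \;=\; 0.
\]
This handles the constant term via the constraint $\sum \pi_i = 1$ rather than by evaluation at the origin, and is perhaps the one-line argument the authors had in mind when omitting the proof.
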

 
In the provided code, we compare all 5-leaf triplet pairs, and 
up to the action of $S_5$, there are 36 pairs with the same linear invariant space. As mentioned, this list consists of the 28 pairs of identical triplets as well as 8 additional pairs containing two distinct triplets.

If there exists $S,T \in  \mathcal{T}_{ [6] , 3}$ such that $V_{T_1}*V_{T_2}*V_{T_3} = V_{S_1}*V_{S_2}*V_{S_3}$, then it must be 
the case that for any five element subset $K$, $(S_{|K} ,T_{|K})$ (or some permutation of the leaves thereof) is one of the 5-leaf 
triplet pairs in our list. Therefore, the only 6-leaf triplet pairs that are candidates for inseparability are those generated by 
attaching an additional edge to each of the 6 trees in an inseparable 5-leaf triplet pair. Since there are 36 pairs, and each 5-leaf tree has 7 edges, the number of 6-leaf triplets we must consider is less
 than  $(36)(7^6) = 4,235,364$. This is far fewer than would be expected had we proceeded directly to the 6-leaf case. 

Just as in the 5-leaf case, nondefectiveness of all of the involved varieties is ensured by Theorem \ref{nondefective}, so it suffices to show that  $V_{T_1}*V_{T_2}*V_{T_3} \not = V_{S_1}*V_{S_2}*V_{S_3}$. This fact is 
reflected in our computation of the set {\tt{AllSixLeafPairs}}, which contains the eighty-five 6-leaf triplet pairs (up to the action of 
$S_6$) that are not separated by any linear invariant.
For these eighty-five pairs of $6$-leaf triplets we must
find higher degree invariants that separate them.

\subsection {Invariants of higher degree}
\label{Invariants of Higher Degree}

If we wish to determine identifiability, we must broaden our search to invariants of higher degree. Unfortunately, we glean little from studying $V_{T_1}, V_{T_2}$, and $V_{T_3}$ individually, 
as there is no reason to expect that a non-linear invariant that holds for all three holds for their join.

\begin{figure}
\includegraphics[width=5.2cm]{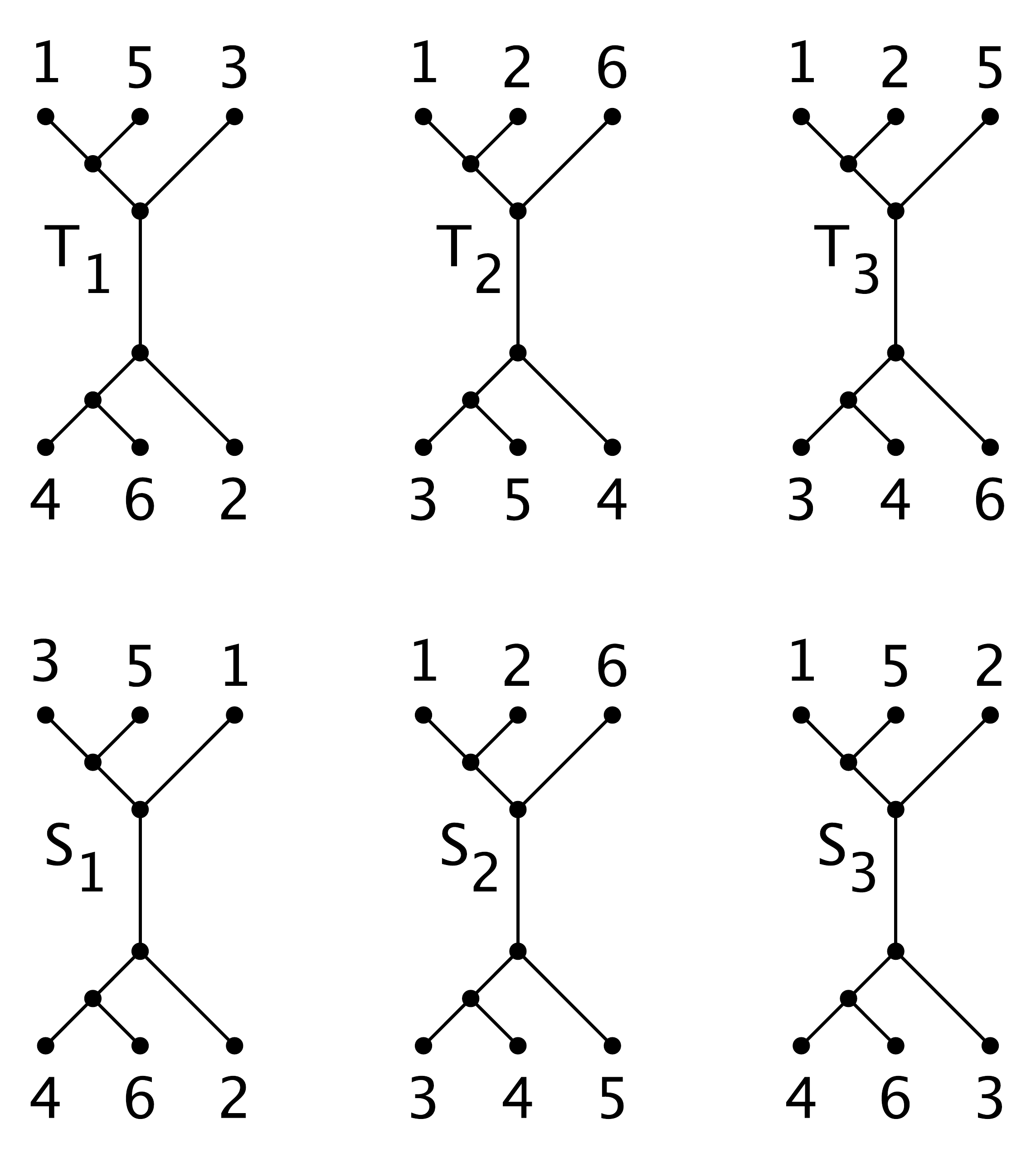}
\caption{A 6-leaf triplet pair that is not separated by linear invariants.}
\label{6LeafExample_Both}
\end{figure}

After removing trivial coordinates and linear invariants that hold for all 6-leaf trees we can perform our calculations for 6-leaf trees in 186-dimensional image space. When searching for linear invariants, we will also disregard any coordinate in which the character A appears in the index so that the bulk of our computations are done in a 31-dimensional image space.  If we let $T$ be the six leaf 
triplet described in Figure \ref{6LeafExample_Both} (in the worksheet, these trees are labelled $T_{16}, T_{19}$ and $T_{63}$), there are 61 linearly independent elements of 
$I(V_{T_1}*V_{T_2}*V_{T_3})_1$ which define a 125-dimensional linear subspace containing $V_{T_1}*V_{T_2}*V_{T_3}$. Since dim($V_{T_1}*V_{T_2}*V_{T_3}) = 29$, at least for this particular triplet, higher-degree invariants must exist. In order to find these, we will use the explicit parameterization of the variety in Fourier 
coordinates. Here, for example, is the parameterization of $T$ on just the first three coordinates. As mentioned in Section \ref{Fourier}, we make the expressions more compact by letting $a_{C}^{i|[6]\setminus i} = a_{i}$, $b_{C}^{i|[6]\setminus i} = b_{i}$, $c_{C}^{i|[6]\setminus i} = c_{i}$,
  $a_C^{15|2346} = a_{7}$, $a_C^{15|2346} = a_{8}$, $a_C^{135|246} = a_{9}$, $b_C^{12|3456} = b_{7}$, $b_C^{15|2346} = b_{8}$, $b_C^{126|345} = b_{9}$, $c_C^{12|3456} = c_{7}$, $c_C^{34|1256} = c_{8}$, and $c_C^{125|346} = c_{9}$.
 \begin{align*}
 q_{AAAACC} &= \pi_1  a_{5} a_{6} a_{7} a_{8} a_{9}  + 
  \pi_2  b_{5} b_{6} b_{8} b_{9}   + 
  \pi_3  c_{5} c_{6} c_{9}  \\
 q_{AAACAC} &=\pi_1 a_4 a_6  +  \pi_2 b_4 b_6 b_9  + \pi_3 c_4 c_6 c_8  \\
q_{AAACCA} &= \pi_1 a_4 a_5 a_7 a_8 a_9  +  \pi_2 b_4 b_5 b_8  +  \pi_3 c_4 c_5 c_8 c_9 \\
 \end{align*} 
\noindent Theoretically, we should be able use these equations to construct an ideal in 
$$\cc[y_1, \ldots, y_{186}, a_1, \ldots, a_9, b_1, \ldots, b_9, c_1, \ldots, c_9, \pi_1, \pi_2],$$ 
and eliminate to obtain a Gr\"obner basis for $I(V_{T_1}*V_{T_2}*V_{T_3})$ that includes non-linear invariants.
The number of variables and equations involved in the computation make it
infeasible.

Instead, we will apply heuristic methods to reduce the number of 
variables in the ideal before we attempt elimination. Our strategy 
will be to find linear invariants that hold for $V_{T_2}$ and $V_{T_3}$, but not for $V_{T_1}$ . The resulting equations will not evaluate to zero on all of $V_{T_1}*V_{T_2}*V_{T_3}$, but will not involve the parameters from $T_2$ or $T_3$ at all. To illustrate, we revisit the trees in Figure \ref{6LeafExample_Both}, and look at a few more coordinates.
\begin{align*}
q_{CCCAAC} &= \pi_1 a_1 a_2 a_3 a_6 a_7 a_8 + \pi_2 b_1 b_2 b_3 b_6 b_8 b_9 + \pi_3 c_1 c_2 c_3 c_6 c_8 \\
q_{CCGAAG} &= \pi_1 a_1 a_2 a_3 a_6 a_7 a_8 a_9 + \pi_2 b_1 b_2 b_3 b_6 b_8 b_9 + \pi_3 c_1 c_2 c_3 c_6 c_8 \\
q_{CACCGT} &= \pi_1 a_1 a_3 a_4 a_5 a_6 a_7 a_8 a_9 + \pi_2 b_1 b_3 b_4 b_5 b_6 b_7 b_8 b_9 + \pi_3 c_1 c_3 c_3 c_4 c_5 c_6 c_7 c_9 \\
q_{CAGGTG} &= \pi_1 a_1 a_3 a_4 a_5 a_6 a_7  + \pi_2 b_1 b_3 b_4 b_5 b_6 b_7 b_8 b_9 + \pi_3 c_1 c_3 c_3 c_4 c_5 c_6 c_7 c_9, \\
\end{align*}
Now it is easy to spot linear invariants for $V_{T_2}$ and $V_{T_3}$, and subtracting we obtain
\begin{align*}
q_{CCCAAC} - q_{CCGAAG} &= \pi_1 a_1 a_2 a_3 a_6 a_7 a_8 - \pi_1 a_1 a_2 a_3 a_6 a_7 a_8 a_9 \\
q_{CACCGT} - q_{CAGGTG} &= \pi_1 a_1 a_3 a_4 a_5 a_6 a_7 a_8 a_9 - \pi_1 a_1 a_3 a_4 a_5 a_6 a_7.  
\end{align*}

In this particular case, $\mathrm{dim}(I(V_{T_1}*V_{T_2}*V_{T_3})_1) - \mathrm{dim}(I(V_{T_2}*V_{T_3})_1) = 20$, so there are twenty linearly 
independent relations only involving the parameters from $T_1$. 
We introduce new variables for the image space and use these relations to construct the ideal,
$$
 J = \langle y_{1} -( \pi_1 a_1 a_2 a_3 a_6 a_7 a_8 - \pi_1 a_1 a_2 a_3 a_6 a_7 a_8 a_9), y_2 - (\pi_1 a_1 a_3 a_4 a_5 a_6 a_7 a_8 a_9 - \pi_1 a_1 a_3 a_4 a_5 a_6 a_7), \ldots \rangle,
 $$
 where $ J \subseteq \cc[y_1, \ldots, y_{20}, a_1, \ldots, a_9, \pi_1]$.
With fewer parameters, we can now compute a Gr\"obner basis for $J \cap \cc[y_1, \ldots, y_{20}]$ using elimination in Macaulay2 \cite{M2}. The basis gives us a set of relations in the $y_i$ variables, which we translate back into our original coordinates. For this particular triplet, we find
$$
(q_{CCCAAC} - q_{CCGAAG})(q_{CACCGT} - q_{CAGGTG}) =
(q_{CCCAGT} - q_{CCGATC}) (q_{CACCAC} - q_{CAGGAC}).
$$ 

\noindent Finally, to separate the triplet pair from Figure \ref{6LeafExample_Both}, we substitute the parameterization of $S$ into this relation and confirm that it does not evaluate to zero.

This technique allows us to find an invariant for one mixture that does not hold for the other for all of the triplet pairs contained in {\tt{AllSixLeafPairs}}. As outlined, the existence of these invariants is sufficient to establish the generic identifiability of the tree parameters of the 3-class Jukes-Cantor mixture model.
Among the supplementary materials is the worksheet {\tt{HigherDegreeInvariants}} which lists an invariant separating each triplet pair and provides code to quickly  generate the coordinate functions for verification.

\section*{Acknowledgments}

Colby Long was partially supported by the US National Science Foundation (DMS 0954865).
Seth Sullivant was partially supported by the David and Lucille Packard 
Foundation and the US National Science Foundation (DMS 0954865).

%
 \bibliography{references}
\bibliographystyle{plain}

\end{document}